\documentclass{article}

\usepackage[preprint]{corl_2020} %

\usepackage{times}
\usepackage{setspace}
\usepackage{url}
\spacing{1}
\usepackage[utf8]{inputenc}
\usepackage[T1]{fontenc}
\usepackage{graphicx}		%
\usepackage{wrapfig}
\usepackage[format=plain,font=footnotesize,labelfont=bf,labelsep=period]{caption}
\usepackage{sidecap} %
\usepackage{subfig}
\usepackage[export]{adjustbox}
\usepackage[font=small]{caption}
\usepackage{float}

\usepackage{amsmath} %
\usepackage{amssymb}  %
\usepackage{amsthm}
\usepackage{mathtools}
\usepackage[normalem]{ulem}
\usepackage{paralist}	%
\usepackage[space]{grffile} %
\usepackage{color}

\newtheorem{theorem}{Theorem}
\newtheorem*{theorem*}{Theorem}
\newtheorem{corollary}{Corollary}
\newtheorem{lemma}{Lemma}
\theoremstyle{definition}
\newtheorem{definition}{Definition}
\theoremstyle{remark}

\theoremstyle{definition}

\theoremstyle{definition}
\newtheorem{proposition}{Proposition}

\newcommand{\R}{\mathbb{R}}
\newcommand{\C}{\mathcal{C}}

\newcommand{\K}{\mathcal{K}}

\definecolor{blue}{RGB}{38,38,134}
\definecolor{darkblue}{RGB}{0,0,102}
\definecolor{lightblue}{RGB}{77,77,148}

\definecolor{gold}{RGB}{234, 170, 0}
\definecolor{metallic_gold}{RGB}{139, 111, 78}

\renewcommand{\cal}[1]{\mathcal{ #1 }}
\newcommand{\mb}[1]{\mathbf{ #1 }}
\newcommand{\bs}[1]{\boldsymbol{ #1 }}

\newcommand{\derp}[2]{\frac{\partial #1 }{\partial #2 }}

\newcommand{\simiid} {\stackrel{\textrm{iid}}{\sim}}

\DeclareMathOperator*{\argmin}{argmin}

\allowdisplaybreaks

\newcommand{\lmat}{\begin{bmatrix}}
\newcommand{\rmat}{\end{bmatrix}} 
\title{Guaranteeing Safety of Learned Perception Modules  via Measurement-Robust Control Barrier Functions}

\author{
  $^1$Sarah Dean~~~~$^2$Andrew J. Taylor~~~~$^2$Ryan K. Cosner \\ \\
  \bf{$^1$Benjamin Recht~~~~$^2$Aaron D. Ames} \\ \\
  $^1$University of California, Berkeley~~~~$^2$California Institute of Technology \\
  \texttt{\{dean\_sarah,brecht\}@berkeley.edu} \\  \texttt{\{ajtaylor,rkcosner,ames\}@caltech.edu}
}

\begin{document}
\maketitle

\begin{abstract}
Modern nonlinear control theory seeks to develop feedback controllers that endow systems with properties such as safety and stability. The guarantees ensured by these controllers often rely on accurate estimates of the system state for determining control actions. In practice, measurement model uncertainty can lead to error in state estimates that degrades these guarantees. In this paper, we seek to unify techniques from control theory and machine learning to synthesize controllers that achieve safety in the presence of measurement model uncertainty. We define the notion of a Measurement-Robust Control Barrier Function (MR-CBF) as a tool for determining safe control inputs when facing measurement model uncertainty. Furthermore, MR-CBFs are used to inform sampling methodologies for learning-based perception systems and quantify tolerable error in the resulting learned models. We demonstrate the efficacy of MR-CBFs in achieving safety with measurement model uncertainty on a simulated Segway system.
\end{abstract}

\keywords{safety, measurements, learning, perception}

\section{Introduction}

Ensuring safety is of utmost importance in modern control systems, and as system complexity increases, it is necessary to rigorously encode safety during the controller design process. Examples of safety-critical control applications include autonomous vehicles, aerospace vehicles, and industrial robotics. In practice, these control systems rely on feedback involving imperfect or uncertain measurements models, which can lead to inaccurate state estimation and unsafe 
behavior if not properly accounted for in controller design. Furthermore, modern control challenges increasingly call for complex measurement systems incorporating data-driven learning methods, such as perception. Thus it is paramount that controllers and learning models be robustly designed to ensure safety in the presence of uncertain measurement models.

Control Barrier Functions (CBFs) \cite{ames2014control,ames2017control} have become increasingly popular \cite{ames2019control,singletary2020control,nguyen2016exponential} as a tool for achieving safety in the form of \textit{set invariance} \cite{blanchini1999set}. Furthermore, the integration of CBFs with machine-learning approaches for reducing model uncertainty has shown great promise theoretically \cite{taylor2020control, cheng2019accelerating, cheng2019end} and in application \cite{taylor2020learning, choi2020reinforcement}. Synthesis of CBFs that encode certain safety properties have been explored from the perspective of backup sets \cite{gurriet2018online} and data-driven methods \cite{robey2020learning}. In many of these settings it is assumed that controllers synthesized via CBFs have perfect access to measurements of the system state. In practice, these measurements are often corrupted due to uncertainty in measurement models and sensor noise. Safety guarantees in the presence of measurement noise has been addressed from a stochastic perspective \cite{clark2019control, nilsson2020lyapunov}, and robustness to error in the estimate of the state was considered via a sub-tangentiality condition in \cite{gurriet2019realizable}. The work in \cite{jankovic2018robust,takano2018robust} considers robust CBF formulations where worst-case disturbance bounds and perfect knowledge of the actuated dynamics enable robust safety. This existing work has not considered the case when the actuated dynamics are uncertain, nor a data-driven approach where deterministic error in the state estimate can be mitigated through learning.

In this work, we consider safety for a setting in which the system states are not directly observed.
Inspired by examples of vision-based control \cite{codevilla2018end,lambert2018deep,tang2018aggressive}, we suppose that state information is observed through a complex transformation, e.g. a camera image, and that an inverse mapping from measurement to state must be estimated.
While many impressive demonstrations in the context of robotics rely on carefully calibrated systems in such settings~\cite{roth1987overview},
rigorous investigations relating data-driven calibration to learning in the context of control are just beginning to receive attention.
This setting has been studied from the perspective of robustness and sample complexity for linear systems, where the main safety concern is stability~\citep{dean2019robust,dean2020robust}.
Other work has considered methods for ensuring obstacle avoidance under loss of observability for nonlinear systems \cite{laine2020eyes}, but do not consider learning directly. To the best of our knowledge, no such analysis exists for nonlinear systems through the perspective of CBFs.

The main contributions of this work are threefold. Firstly, we present the novel definition of Measurement-Robust Control Barrier Functions (MR-CBF) which modify the definition of CBFs to enable robustness to error in measurement models. Secondly, we show how MR-CBFs can be incorporated into convex optimization-based controllers that can be efficiently solved online. Lastly, we outline how MR-CBFs can be used to guide data sampling methodology in order to ensure that the resultant model yields tolerable error. 

This remainder of this paper is organized as follows. In Section \ref{sec:background} we provide a review of Control Barrier Functions and their use in synthesizing safe controllers. Section \ref{sec:MR-CBF} explores the impact of measurement model uncertainty on safety, and defines the MR-CBF as a tool for synthesizing controllers that guarantee safety in the absence of perfect state observation. In Section \ref{sec:learning} we explain how data-driven learning methods can be used to reduce measurement model uncertainty and yield valid MR-CBFs. Finally, Section \ref{sec:result} presents simulation results which demonstrate the efficacy of MR-CBFs in enforcing safety in the presence of measurement model uncertainty for a Segway system. Additional details including proofs are provided in supplementary appendices. 

\section{Background} 
\label{sec:background}
In this section we provide a review of safety and Control Barrier Functions (CBFs). These definitions will be used in quantifying how measurement uncertainty impacts safety guarantees. 

Consider the nonlinear control affine system given by:
\begin{equation}
    \label{eqn:eom}
    \dot{\mb{x}} = \mb{f}(\mb{x})+\mb{g}(\mb{x})\mb{u},
\end{equation}
where $\mb{x}\in\R^n$, $\mb{u}\in\R^m$, and $\mb{f}:\R^n\to\R^n$ and $\mb{g}:\R^n\to\R^{n\times m}$ are locally Lipschitz continuous on $\R^n$. Given a locally Lipschitz continuous state-feedback controller $\mb{k}:\R^n\to\R^m$, the closed-loop system dynamics are:
\begin{equation}
    \label{eqn:cloop}
    \dot{\mb{x}} = \mb{f}_{\textrm{cl}}(\mb{x}) \triangleq  \mb{f}(\mb{x})+\mb{g}(\mb{x})\mb{k}(\mb{x}).
\end{equation}
The assumption on local Lipschitz continuity of $\mb{f}$, $\mb{g}$, and $\mb{k}$ implies that $\mb{f}_\textrm{cl}$ is locally Lipschitz continuous. Thus for any initial condition $\mb{x}_0 \triangleq \mb{x}(0) \in \R^n$ there exists a time interval $I(\mb{x}_0) = [0, t_{\textrm{max}})$ such that $\mb{x}(t)$ is the unique solution to \eqref{eqn:cloop} on $I(\mb{x}_0)$ \cite{perko2013differential}.

The notion of safety that we consider is formalized by specifying a \textit{safe set} in the state space that the system must remain in to be considered safe. In particular, consider a set $\C\subset \R^n$ defined as the 0-superlevel set of a continuously differentiable function $h:\R^n \to \R$, yielding:
\begin{equation}
    \C \triangleq \left\{\mb{x} \in \R^n : h(\mb{x}) \geq 0\right\}. \label{eqn:safeset}
\end{equation}
We refer to $\C$ as the \textit{safe set}, and note $ \partial\C \triangleq \{\mb{x} \in \R^n : h(\mb{x}) = 0\}$ and $ \textrm{Int}(\C) \triangleq \{\mb{x} \in \R^n : h(\mb{x}) > 0\}$. We assume that $\C$ is nonempty and has no isolated points, that is, $\textrm{Int}(\C) \not = \emptyset \textrm{ and }\overline{\textrm{Int}(\C)} = \C$. This construction motivates the following definitions of forward invariance and safety:

\begin{definition}[\textit{Forward Invariant \& Safety}]
A set $\C\subset\R^n$ is \textit{forward invariant} if for every $\mb{x}_0\in\C$, the solution $\mb{x}(t)$ to \eqref{eqn:cloop} satisfies $\mb{x}(t) \in \C$ for all $t \in I(\mb{x}_0)$. The system \eqref{eqn:cloop} is \textit{safe} with respect to the set $\C$ if the set $\C$ is forward invariant.
\end{definition}

Before defining Control Barrier Functions as a tool for synthesizing safe controllers, we note that a continuous function $\alpha:(-b,a)\to\R$, with $a,b>0$, is said to belong to \textit{extended class $\cal{K}$} ($\alpha\in\cal{K}_e$) if $\alpha(0)=0$ and $\alpha$ is strictly monotonically increasing. If $a,b=\infty$, $\lim_{r\to\infty}\alpha(r)=\infty$, and $\lim_{r\to-\infty}\alpha(r)=-\infty$, then $\alpha$ is said to belong to \textit{extended class $\cal{K}_\infty$} ($\alpha\in\cal{K}_{\infty,e}$). Furthermore, we note $c\in\R$ is referred to as a \textit{regular value} of a continuously differentiable function $h:\R^n\to\R$ if $h(\mb{x})=c\implies\derp{h}{\mb{x}}(\mb{x})\neq\mb{0}$. These enable the definition of Control Barrier Functions as follows:

\begin{definition}[\textit{Control Barrier Function (CBF)}, \cite{ames2017control}]\label{def:cbf}
Let $\C\subset\R^n$ be the 0-superlevel set of a continuously differentiable function $h:\R^n\to\R$ with $0$ a regular value. The function $h$ is a \textit{Control Barrier Function} (CBF) for \eqref{eqn:eom} on $\C$ if there exists $\alpha\in\K_{\infty,e}$ such that for all $\mb{x}\in\C$:
\begin{equation}
\label{eqn:cbf}
     \sup_{\mb{u}\in\R^m} \dot{h}(\mb{x},\mb{u}) \triangleq \underbrace{\derp{h}{\mb{x}}(\mb{x})\mb{f}(\mb{x})}_{L_\mb{f}h(\mb{x})}+\underbrace{\derp{h}{\mb{x}}(\mb{x})\mb{g}(\mb{x})}_{L_\mb{g}h(\mb{x})}\mb{u}>-\alpha(h(\mb{x})).
\end{equation}
\end{definition}
This condition can be equivalently stated as:
\begin{equation}
\label{eqn:cbfalt}
    \Vert L_\mb{g}h(\mb{x}) \Vert_2 = 0 \implies L_\mb{f}h(\mb{x}) > -\alpha(h(\mb{x})). 
\end{equation}
The inequality in the definition of CBFs is strict to ensure that the controllers synthesized via CBFs are locally Lipschitz continuous \cite{jankovic2018robust}. Given a CBF $h$ for \eqref{eqn:eom} and a corresponding $\alpha\in\cal{K}_{\infty,e}$, we can consider the pointwise set of all control values that satisfy \eqref{eqn:cbf}:
\begin{equation}
    K_{\textrm{cbf}}(\mb{x}) \triangleq \left\{\mb{u}\in\R^m ~\left|~ L_\mb{f}h(\mb{x})+L_\mb{g}h(\mb{x})\mb{u}\geq-\alpha(h(\mb{x})) \right. \right\}.
\end{equation}
A main result in \cite{xu2015robustness, ames2014control} relates controllers taking values in  $K_{\textrm{cbf}}(\mb{x})$ to the safety of \eqref{eqn:cloop} on $\C$:
\begin{theorem}\label{thm:cbf_safe}
Given a set $\C\subset\R^n$ defined as the 0-superlevel set of a continuously differentiable function $h:\R^n\to\R$, if $h$ is a CBF for \eqref{eqn:eom} on $\C$, then any locally Lipschitz continuous controller $\mb{k}:\R^n\to\R^m$, such that $\mb{k}(\mb{x})\in K_{\textrm{cbf}}(\mb{x})$ for all $\mb{x}\in\C$, renders the system \eqref{eqn:cloop} safe w.r.t. $\C$.
\end{theorem}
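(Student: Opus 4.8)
The plan is to reduce the claim to a scalar differential inequality along closed-loop trajectories and then apply a comparison argument. First I would fix $\mb{x}_0 \in \C$ and let $\mb{x}(t)$ denote the unique solution to \eqref{eqn:cloop} on $I(\mb{x}_0)$ guaranteed by the local Lipschitz continuity of $\mb{f}_{\textrm{cl}}$. Since $\mb{k}(\mb{x}) \in K_{\textrm{cbf}}(\mb{x})$ for every $\mb{x} \in \C$, substituting $\mb{u} = \mb{k}(\mb{x})$ into the inequality defining $K_{\textrm{cbf}}$ gives
\begin{equation*}
\derp{h}{\mb{x}}(\mb{x})\mb{f}_{\textrm{cl}}(\mb{x}) = L_\mb{f}h(\mb{x}) + L_\mb{g}h(\mb{x})\mb{k}(\mb{x}) \geq -\alpha(h(\mb{x})) \quad \text{for all } \mb{x} \in \C.
\end{equation*}
Writing $y(t) \triangleq h(\mb{x}(t))$, which is continuously differentiable along the trajectory by the chain rule, this reads $\dot{y}(t) \geq -\alpha(y(t))$ whenever $\mb{x}(t) \in \C$.

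Next I would introduce the scalar comparison system $\dot{z} = -\alpha(z)$ with $z(0) = y(0) = h(\mb{x}_0) \geq 0$, the last inequality holding because $\mb{x}_0 \in \C$. Since $\alpha \in \K_{\infty,e}$ has $\alpha(0) = 0$, the constant $z \equiv 0$ is an equilibrium, and the strict monotonicity of $\alpha$ (so that $\alpha(z) < 0$ for $z < 0$) forces any solution starting at $z(0) \geq 0$ to remain nonnegative for all forward time. The comparison lemma for differential inequalities then yields $y(t) \geq z(t) \geq 0$, i.e. $h(\mb{x}(t)) \geq 0$, so $\mb{x}(t) \in \C$; this is exactly the forward invariance of $\C$, and hence the safety of \eqref{eqn:cloop}.

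The step demanding the most care --- and the main obstacle --- is that the inequality $\dot{y} \geq -\alpha(y)$ is only available where $\mb{x}(t) \in \C$, which is precisely what we are trying to establish, so the comparison bound cannot be invoked naively over all of $I(\mb{x}_0)$. I would close this gap with a first-exit-time argument: set $t^\star = \sup\{ t \in I(\mb{x}_0) : \mb{x}(s) \in \C \text{ for all } s \in [0,t] \}$. On $[0, t^\star)$ the inequality holds, so the comparison bound gives $h(\mb{x}(t)) \geq 0$ there, and continuity of $h \circ \mb{x}$ extends this to $t^\star$, whence $\mb{x}(t^\star) \in \C$; moreover on $\partial\C$ one has $\dot{h} \geq -\alpha(0) = 0$, preventing an immediate exit, which contradicts maximality unless $t^\star = t_{\textrm{max}}$. (Equivalently, one may appeal to a Nagumo-type tangency condition on $\partial\C$.) A full write-up should also confirm the regularity needed for the comparison lemma: local Lipschitz continuity of the closed-loop vector field makes $y = h \circ \mb{x}$ continuously differentiable, and where $\alpha$ fails to be locally Lipschitz one works with the maximal solution of the comparison ODE, the sign argument above still guaranteeing its nonnegativity.
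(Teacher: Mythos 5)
Your reduction to the scalar inequality $\dot y \geq -\alpha(y)$ and the comparison step are sound, and you correctly isolated the real difficulty: the inequality is only available while $\mb{x}(t)\in\C$. But the step you use to close that loop --- ``on $\partial\C$ one has $\dot h \geq -\alpha(0)=0$, preventing an immediate exit'' --- is not a valid inference, and it is precisely the crux of the theorem. A nonnegative derivative at the single contact time $t^\star$ does not preclude $h(\mb{x}(t))$ from becoming negative immediately afterwards: the scalar function $y(t)=0$ for $t\leq t^\star$ and $y(t)=-(t-t^\star)^2$ for $t>t^\star$ is continuously differentiable, satisfies $\dot y(t)\geq-\alpha(y(t))$ at every time where $y(t)\geq 0$, and still exits the nonnegative half-line. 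Consequently, no argument using only the scalar data you have extracted (the differential inequality on the set of times where the trajectory lies in $\C$) can finish the proof; one must use properties of the flow itself. That is exactly what Nagumo's (Brezis') theorem supplies: since $0$ is a regular value of $h$ (a hypothesis built into Definition~\ref{def:cbf}, and this is where it gets used), at any $\mb{x}\in\partial\C$ the tangent cone to $\C$ is the half-space $\left\{\mb{v}\in\R^n : \derp{h}{\mb{x}}(\mb{x})\mb{v}\geq 0\right\}$; the CBF inequality places $\mb{f}_{\textrm{cl}}(\mb{x})$ in this cone, and forward invariance then follows from sub-tangentiality together with uniqueness of solutions (local Lipschitz continuity of the closed loop). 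Your parenthetical ``equivalently, one may appeal to a Nagumo-type tangency condition'' is therefore not an interchangeable alternative to your exit-time step --- it is the missing ingredient. Note also that the paper itself offers no proof of this theorem; it cites \cite{xu2015robustness,ames2014control}, and the proofs there go through exactly this sub-tangentiality argument.

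If you want to keep a purely comparison-based proof, the standard repair is to strengthen the hypothesis so that the controller satisfies the CBF inequality on an open set $\D\supset\C$ (as in the formulations of the cited works). Then your own exit-time argument closes without Nagumo: let $t_1$ be the first time the trajectory leaves $\D$; the comparison bound on $[0,t_1)$ gives $h(\mb{x}(t))\geq 0$, hence $\mb{x}(t)\in\C$ there, and continuity together with closedness of $\C$ forces $\mb{x}(t_1)\in\C\subset\D$, contradicting $\mb{x}(t_1)\notin\D$ (the complement of an open set is closed). So the trajectory never leaves $\D$, and the comparison bound holds on all of $I(\mb{x}_0)$. The openness of $\D$ is what makes this contradiction available, and it is exactly what is missing when the inequality is assumed only on the closed set $\C$, as in the statement you were asked to prove.
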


This result motivates the construction of a pointwise optimal controller seeking to minimize a cost associated with the choice of input. To this end, we consider the \textit{safety-critical control} formulation \cite{gurriet2018towards} that seeks to filter a hand-designed but potentially unsafe locally Lipschitz continuous controller, $\mb{k}_d:\R^n\to\R^m$, to find the nearest safe action:
\begin{align}
\label{eqn:CBF-QP}
\tag{CBF-QP}
\mb{k}(\mb{x}) =  \,\,\underset{\mb{u} \in \R^m}{\argmin}  &  \quad \frac{1}{2} \Vert \mb{u}-\mb{k}_d(\mb{x}) \Vert_2^2  \\
\mathrm{s.t.} \quad & \quad L_\mb{f}h(\mb{x})+L_\mb{g}h(\mb{x})\mb{u}\geq-\alpha(h(\mb{x})). \nonumber
\end{align}
The validity of $h$ as a CBF ensures the feasibility of this optimization problem, and the resulting controller is locally Lipschitz continuous \cite{jankovic2018robust}.
    
\section{Measurement-Robust Control Barrier Functions}
\label{sec:MR-CBF}
In this section we explore the impact of measurement model uncertainty on safety guarantees, and propose the notion of a modified Control Barrier Function that is robust to such errors.

In many practical applications, the state $\mb{x}$ is not directly available to the controller, but rather a state-dependent sensor measurement: 
\begin{equation}
    \mb{y} = \mb{p}(\mb{x}),
\end{equation}
where $\mb{p}:\R^n\to\R^k$ is assumed to be locally Lipschitz continuous. We assume the relationship between the measurement and the true state is deterministic, and note the application of CBFs in the context of stochastic differential equations has been considered in \cite{clark2019control}. Future work will consider the unification of the results in this paper with the stochastic setting. We further assume there exists a locally Lipschitz continuous function $\mb{q}:\R^k\to\R^n$ such that for all $\mb{x}\in\R^n$, we have $\mb{q}(\mb{p}(\mb{x}))=\mb{x}$. This assumption implies that the state can be uniquely determined from any given measurement. This bijective relationship would allow the measurements to be redefined as the state of the system if the function $\mb{p}$ was known, but that is often not the case in many modern control applications (such as when using vision).

While the function $\mb{p}$ is often determined by the physical attributes of a system, a locally Lipschitz continuous estimate of the function $\mb{q}$, given by $\widehat{\mb{q}}:\R^k\to\R^n$, is constructed to determine an estimate of the state, $\widehat{\mb{x}}$. For notational simplicity we define the measurement-estimate function $\widehat{\mb{v}}:\R^n\to\R^k\times\R^n$ such that $\widehat{\mb{v}}(\mb{x})=(\mb{p}(\mb{x}),\widehat{\mb{q}}(\mb{p}(\mb{x}))$. We also define the set $\mb{p}(\mathcal{C})\subset\R^k$ as the image of the safe set under the measurement function, the set $\widehat{\mb{q}}(\mb{p}(\mathcal{C}))\subset\R^n$ as the image of the safe set under the state estimate function, and $\widehat{\mb{v}}(\C)$ as the image of the safe set under the measurement-estimate function.

The function $\widehat{\mb{q}}$ is constructed either via system and measurement models, or from data using learning methods, and thus its accuracy in estimating $\mb{q}$ degrades with imperfections in sensor fabrication and integration, or imperfections in learning models and training data. Thus we assume that our state estimate is related to the true state as follows:
\begin{equation}
    \widehat{\mb{x}} \triangleq \widehat{\mb{q}}(\mb{y}) = \mb{x}+\mb{e}(\mb{x}),
\end{equation}
for an unknown function $\mb{e}:\R^n\to\R^n$ that is defined implicitly via $\widehat{\mb{q}}$. In practice, the function $\mb{e}$ can often be characterized via upper bounds on model uncertainty or via data-driven arguments for learning models (discussed further in Section \ref{sec:learning}).  In particular, we assume that while $\mb{e}(\mb{x})$ is not known for a particular value of $\mb{x}$, it is known that $\mb{e}(\mb{x})\in\mathcal{E}(\mb{y})$ for a measurement dependent, compact pointwise set $\mathcal{E}(\mb{y})$.
This leads to the definition of the following two pointwise sets:
\begin{align}
    \widehat{\mathcal{X}}(\mb{x}) \triangleq &  \left\{\widehat{\mb{x}}\in\R^n~|~\exists~\mb{e}\in\mathcal{E}(\mb p(\mb x)) ~\textrm{s.t.}~ \widehat{\mb{x}}=\mb{x}+\mb{e} \right\}, \label{eqn:Xhat}\\ \mathcal{X}(\mb{y}) \triangleq &  \left\{\mb{x}\in\R^n~|~\exists~\mb{e}\in\mathcal{E}(\mb{y}) ~\textrm{s.t.}~ \widehat{\mb{x}}=\mb{x}+\mb{e} \right\}. \label{eqn:X}
\end{align}
The first of these two pointwise sets can be interpreted as all possible state estimates corresponding to a particular state, restricted by the possible error dictated by $\mathcal{E}(\mb p(\mb x))$. 
While it is not directly computable without knowledge of $\mb p$, this set will play an important conceptual role in Section \ref{sec:learning} to argue about how data can be used to determine error bounds. 
The second pointwise set consists of all potential states that may yield a measurement-state estimate pair. 

Since a controller enforcing the CBF condition \eqref{eqn:cbf} requires exact knowledge of the state $\mb{x}$, we propose an alternative condition which depends on only the set $\mathcal{X}(\mb{y})$ and the state estimate $\widehat{\mb{x}}$.
To ensure safety with a CBF, it is sufficient for the following condition to hold for all $\mb{y} \in\mb{p}(\C)$:
\begin{equation} 
    \sup_{\mb{u}\in\R^m} \inf_{\mb{x}\in\mathcal{X}(\mb{y})} \derp{h}{\mb{x}}(\mb{x})(\mb{f}(\mb{x}) + \mb{g}(\mb{x}) \mb{u}) +\alpha(h(\mb{x})) \geq 0\:.
\end{equation}
This condition implies that there exists a control input that renders the system safe for all possible states corresponding to a given state estimate. Verifying that this condition holds can be difficult for an arbitrary CBF, and it is not easily (or possibly) enforced in a convex-optimization based controller. To resolve these problems, we introduce the following definition: 

\begin{definition}[\textit{Measurement-Robust Control Barrier Function (MR-CBF)}]
Let $\C\subset\R^n$ be the 0-superlevel set of a continuously differentiable function $h:\R^n\to\R$ with $0$ a regular value. The function $h$ is a \textit{Measurement-Robust Control Barrier Function} (MR-CBF) for \eqref{eqn:eom} on $\C$ with \textit{parameter function} $(a,b):\R^k\to\R^2_+$ if there exists $\alpha\in\K_{\infty,e}$ such that for all $(\mb{y},\widehat{\mb{x}})\in\widehat{\mb{v}}(\C)$:
\begin{equation}
\label{eqn:mr-cbf}
     \sup_{\mb{u}\in\R^m} L_\mb{f}h(\widehat{\mb{x}})+L_\mb{g}h(\widehat{\mb{x}})\mb{u} - (a(\mb{y}) + b(\mb{y}) \Vert\mb{u}\Vert_2)  >-\alpha(h(\widehat{\mb{x}})).
\end{equation}
\end{definition}
Verifying this condition over $\widehat{\mb{v}}(\C)$ may not be possible, but as will be seen in Section \ref{sec:learning},  the set $\widehat{\mathcal{X}}(\mb x)$ can be used to provide sufficient conditions under which~\eqref{eqn:mr-cbf} is met. The definition of a MR-CBF introduces the non-positive term $-(a(\mb{y})+b(\mb{y})\Vert\mb{u}\Vert_2)$ to the control barrier condition, requiring that a stronger degree of safety be enforced compared to the typical control barrier condition. Furthermore, the norm of the input appears in this term, indicating that for large values of $b$ large inputs can lead to unsafe behavior. This condition is equivalently stated as:
\begin{equation}
\label{eqn:mrcbfalt}
    \Vert L_\mb{g}h(\widehat{\mb{x}}) \Vert_2 \leq b(\mb{y}) \implies L_\mb{f}h(\widehat{\mb{x}}) > -\alpha(h(\widehat{\mb{x}})) + a(\mb{y}). 
\end{equation}
In contrast to the implication in \eqref{eqn:cbfalt}, the size of set for which the antecedent in \eqref{eqn:mrcbfalt} is met may be larger, requiring the natural dynamics to be safe ($L_\mb{f}h(\widehat{\mb{x}}) > -\alpha(h(\widehat{\mb{x}})) + a(\mb{y})$) in a larger region. Given a MR-CBF $h$ for \eqref{eqn:eom} on $\C$ with parameter function $(a,b)$ and a corresponding $\alpha\in\cal{K}_{\infty,e}$, we can consider the pointwise set of all control values that satisfy \eqref{eqn:mr-cbf}:
\begin{equation}
\label{eqn:Kmrcbf}
    K_{\textrm{mr-cbf}}(\mb{y},\widehat{\mb{x}}) \triangleq \left\{\mb{u}\in\R^m ~\left|~ L_\mb{f}h(\widehat{\mb{x}})+L_\mb{g}h(\widehat{\mb{x}})\mb{u}- ( a(\mb{y}) + b(\mb{y}) \Vert\mb{u}\Vert_2)\geq-\alpha(h(\widehat{\mb{x}})) \right. \right\},
\end{equation}
for $(\mb{y},\widehat{\mb{x}})\in\widehat{\mb{v}}(\C)$. Given this construction, we have the following result relating the existence of a MR-CBF to safety under the presence of measurement model uncertainty:

\begin{theorem}\label{thm:safety}
Let a set $\C\subset\R^n$ be defined as the 0-superlevel set of a continuously differentiable function $h:\R^n\to\R$. Assume the functions $L_\mb{f}h:\R^n\to\R$, $L_\mb{g}h:\R^n\to\R^m$, and $\alpha\circ h:\R^n\to\R$ are Lipschitz continuous on $\C$ with Lipschitz coefficients $\mathfrak{L}_{L_\mb{f}h}$, $\mathfrak{L}_{L_\mb{g}h}$, and $\mathfrak{L}_{\alpha\circ h}$, respectively. Further assume there exists a locally Lipschitz function $\epsilon:\R^k\to\R_+$, such that $\max_{\mb{e}\in\mathcal{E}(\mb{y})}\Vert\mb{e}\Vert_2\leq\epsilon(\mb{y})$ for all $\mb{y} \in \mb{p}(\C)$. If $h$ is a MR-CBF for \eqref{eqn:eom} on $\C$ with parameter function $(\epsilon(\mb{y})(\mathfrak{L}_{L_\mb{f}h}+\mathfrak{L}_{\alpha\circ h}),\epsilon(\mb{y})\mathfrak{L}_{L_\mb{g}h})$, then any locally Lipschitz continuous controller $\mb{k}:\R^k\times\R^n\to\R^m$, such that $\mb{k}(\mb{y},\widehat{\mb{x}})\in K_{\textrm{mr-cbf}}(\mb{y},\widehat{\mb{x}})$ for all $(\mb{y},\widehat{\mb{x}})\in\widehat{\mb{v}}(\C)$, renders the system \eqref{eqn:cloop} safe w.r.t. $\C$.
\end{theorem}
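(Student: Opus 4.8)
The plan is to reduce this to Theorem~\ref{thm:cbf_safe} by showing that any input satisfying the MR-CBF membership at the \emph{estimated} state $\widehat{\mb{x}}$ automatically satisfies the ordinary CBF membership at the \emph{true} state $\mb{x}$. Since both $\mb{y}=\mb{p}(\mb{x})$ and $\widehat{\mb{x}}=\widehat{\mb{q}}(\mb{p}(\mb{x}))$ are determined by $\mb{x}$, I would first package the output-feedback law as a genuine state-feedback controller $\widetilde{\mb{k}}(\mb{x})\triangleq\mb{k}(\widehat{\mb{v}}(\mb{x}))=\mb{k}(\mb{p}(\mb{x}),\widehat{\mb{q}}(\mb{p}(\mb{x})))$. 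Because $\mb{k}$, $\mb{p}$, and $\widehat{\mb{q}}$ are all locally Lipschitz, $\widetilde{\mb{k}}$ is locally Lipschitz, so the closed loop~\eqref{eqn:cloop} driven by $\widetilde{\mb{k}}$ is well posed and Theorem~\ref{thm:cbf_safe} will be applicable once the membership claim is established.

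The core of the argument is a term-by-term Lipschitz comparison. Fix $\mb{x}\in\C$ and write $\mb{y}=\mb{p}(\mb{x})$, $\widehat{\mb{x}}=\mb{x}+\mb{e}(\mb{x})$, so that $\|\mb{x}-\widehat{\mb{x}}\|_2=\|\mb{e}(\mb{x})\|_2\leq\epsilon(\mb{y})$ by $\mb{e}(\mb{x})\in\mathcal{E}(\mb{y})$ and the assumed bound. Set $\mb{u}=\widetilde{\mb{k}}(\mb{x})$, which by hypothesis lies in $K_{\textrm{mr-cbf}}(\mb{y},\widehat{\mb{x}})$ as defined in~\eqref{eqn:Kmrcbf}. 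I would then lower-bound the true-state quantity $L_\mb{f}h(\mb{x})+L_\mb{g}h(\mb{x})\mb{u}+\alpha(h(\mb{x}))$ by its estimated-state counterpart $L_\mb{f}h(\widehat{\mb{x}})+L_\mb{g}h(\widehat{\mb{x}})\mb{u}+\alpha(h(\widehat{\mb{x}}))$ minus the three error terms: $|L_\mb{f}h(\mb{x})-L_\mb{f}h(\widehat{\mb{x}})|\leq\mathfrak{L}_{L_\mb{f}h}\epsilon(\mb{y})$, $|\alpha(h(\mb{x}))-\alpha(h(\widehat{\mb{x}}))|\leq\mathfrak{L}_{\alpha\circ h}\epsilon(\mb{y})$, and, via Cauchy--Schwarz, $|(L_\mb{g}h(\mb{x})-L_\mb{g}h(\widehat{\mb{x}}))\mb{u}|\leq\mathfrak{L}_{L_\mb{g}h}\epsilon(\mb{y})\|\mb{u}\|_2$. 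Collecting these, the subtracted slack is exactly $\epsilon(\mb{y})(\mathfrak{L}_{L_\mb{f}h}+\mathfrak{L}_{\alpha\circ h})+\epsilon(\mb{y})\mathfrak{L}_{L_\mb{g}h}\|\mb{u}\|_2$, which matches the parameter function $(a(\mb{y}),b(\mb{y}))=(\epsilon(\mb{y})(\mathfrak{L}_{L_\mb{f}h}+\mathfrak{L}_{\alpha\circ h}),\,\epsilon(\mb{y})\mathfrak{L}_{L_\mb{g}h})$ appearing in~\eqref{eqn:Kmrcbf}. The rearranged membership condition gives that this estimated-state expression minus the slack is $\geq 0$, hence $L_\mb{f}h(\mb{x})+L_\mb{g}h(\mb{x})\mb{u}+\alpha(h(\mb{x}))\geq 0$, i.e. $\widetilde{\mb{k}}(\mb{x})\in K_{\textrm{cbf}}(\mb{x})$.

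Having shown $\widetilde{\mb{k}}(\mb{x})\in K_{\textrm{cbf}}(\mb{x})$ for every $\mb{x}\in\C$ (each such $\mb{x}$ maps to a pair $(\mb{y},\widehat{\mb{x}})\in\widehat{\mb{v}}(\C)$, where the MR-CBF hypothesis applies), and having already noted that $\widetilde{\mb{k}}$ is locally Lipschitz, I would invoke Theorem~\ref{thm:cbf_safe} to conclude that $\C$ is forward invariant, i.e. the closed loop is safe with respect to $\C$.

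The main obstacle is modest: the entire content sits in the second step's Lipschitz comparison, and the one point demanding care is aligning the input-dependent error from $L_\mb{g}h$ with the $b(\mb{y})\|\mb{u}\|_2$ penalty rather than treating it as a constant margin — this is precisely why the MR-CBF condition carries an input-norm term. I would also be careful to use the non-strict ($\geq$) membership form throughout the safety deduction; the strict inequality in the MR-CBF definition is needed only to guarantee feasibility and local Lipschitz continuity of the induced controller, which is what legitimizes applying Theorem~\ref{thm:cbf_safe} in the first place.
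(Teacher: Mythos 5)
Your proposal is correct and follows essentially the same argument as the paper: both reduce to Theorem~\ref{thm:cbf_safe} via the identical term-by-term Lipschitz comparison of $L_\mb{f}h$, $L_\mb{g}h\mb{u}$, and $\alpha\circ h$ between $\mb{x}$ and $\widehat{\mb{x}}$, with the error terms matching the parameter function exactly. The only cosmetic difference is that the paper phrases the bound as a worst case over the uncertainty set $\mathcal{X}(\mb{y})$ (an infimum over $\mb{e}\in\mathcal{E}(\mb{y})$), whereas you compare directly at the realized error $\mb{e}(\mb{x})$ with $\Vert\mb{e}(\mb{x})\Vert_2\leq\epsilon(\mb{y})$; your explicit verification that the composed controller $\widetilde{\mb{k}}=\mb{k}\circ\widehat{\mb{v}}$ is locally Lipschitz is a detail the paper leaves implicit.
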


A proof of this theorem can be found in Appendix \ref{app:mrcbfproof}. To more clearly see how the upper bound on the estimate error, $\epsilon(\mb{y})$, manifests in the MR-CBF condition, we note the particular condition that must be satisfied for this theorem is given by:
\begin{equation}
    \label{eqn:mrcbflip}
     \sup_{\mb{u}\in\R^m} L_\mb{f}h(\widehat{\mb{x}})+L_\mb{g}h(\widehat{\mb{x}})\mb{u}-\epsilon(\mb{y})(\mathfrak{L}_{L_\mb{f}h}+\mathfrak{L}_{\alpha\circ h}+\mathfrak{L}_{L_\mb{g}h}\Vert\mb{u}\Vert_2)>-\alpha(h(\widehat{\mb{x}})).
\end{equation}
Thus as $\epsilon(\mb{y})$ becomes smaller, the level of robustness required by an MR-CBF approaches that of a regular CBF for the same set $\C$, and recovers the original control barrier condition with no estimate error. Furthermore, smaller values of $\epsilon(\mb{y})$ can be interpreted as leading to an enlarging of the region over which the condition \eqref{eqn:mrcbfalt} holds.

One advantage of this approach for resolving the impact of measurement model uncertainty on safety is that the constraint in \eqref{eqn:Kmrcbf} remains convex. This constraint can then be directly integrated into an optimization based controller as follows:
\begin{align}
\label{eqn:MR-OP}
\tag{MR-OP}
\mb{k}(\mb{y},\widehat{\mb{x}}) =  \,\,\underset{\mb{u} \in \R^m}{\argmin}  &  \quad \frac{1}{2} \Vert \mb{u}-\mb{k}_d(\widehat{\mb{x}}) \Vert_2^2  \\
\mathrm{s.t.} \quad & \quad L_\mb{f}h(\widehat{\mb{x}})-(\mathfrak{L}_{L_\mb{f}h}+\mathfrak{L}_{\alpha\circ h})\epsilon(\mb{y})+L_\mb{g}h(\widehat{\mb{x}})\mb{u}-\mathfrak{L}_{L_\mb{g}h}\epsilon(\mb{y})\Vert\mb{u}\Vert_2\geq-\alpha(h(\widehat{\mb{x}})). \nonumber
\end{align}
This problem is in fact a second-order cone program (SOCP), with an explicit conversion to standard form provided in Appendix \ref{app:mropsocp}. As this constraint is non-smooth, existing methods for computing closed-form solutions via Lagrangian duality and assessing Lipschitz continuity \cite{jankovic2018robust} are not applicable. Future work will consider methods from variational analysis to study the Lipschitz continuity of solutions to this problem \cite{mordukhovich2014full}. In practice, a slack variable, $\delta$, is often added to ensure constraint feasibility. This relaxation is penalized in the cost with a large coefficient $p\in\R_{++}$:
\begin{align}
\label{eqn:R-MR-OP}
\tag{R-MR-OP}
\mb{k}(\mb{y},\widehat{\mb{x}}) =  \,\,\underset{(\mb{u},\delta) \in \R^m\times\R}{\argmin}  &  \quad \frac{1}{2} \Vert \mb{u}-\mb{k}_d(\widehat{\mb{x}}) \Vert_2^2 + p\delta^2  \\
\mathrm{s.t.} \quad   L_\mb{f}h&(\widehat{\mb{x}})-(\mathfrak{L}_{L_\mb{f}h}+\mathfrak{L}_{\alpha\circ h})\epsilon(\mb{y})+L_\mb{g}h(\widehat{\mb{x}})\mb{u}-\mathfrak{L}_{L_\mb{g}h}\epsilon(\mb{y})\Vert\mb{u}\Vert_2\geq-\alpha(h(\widehat{\mb{x}}))-\delta. \nonumber
\end{align}

While this relaxed controller does not necessarily enforce the desired safety constraint, if $\delta$ remains small the impact on safety can be understood through the notion of projection-to-state safety \cite{taylor2020control}. Furthermore, this relaxation ensures that the resulting controller is locally Lipschitz continuous as made explicit in Appendix \ref{app:rmroplipschitz} via the methods in \cite{dontchev1993lipschitzian}.
   
\section{Learning for Measurement Model Uncertainty Reduction}
\label{sec:learning}
In this section we explore how data-driven learning methods can be used to reduce measurement model uncertainty and yield valid MR-CBFs.

The previous section provides a method for guaranteeing safety in the absence of perfect state observation which relies on specifying a valid MR-CBF $h$.
The condition in~\eqref{eqn:mr-cbf} restricts admissible $h$ and $\C$ with a stronger condition than the typical CBF condition.
A CBF $h$ which is valid when states are perfectly observed may no longer be valid in the presence of measurement model uncertainty.
In this section, we reconsider the requirement that $h$ be a valid MR-CBF as a specification on measurement or calibration errors.
Clearly, as measurement model uncertainty becomes arbitrarily close to $0$, any valid CBF will be a valid MR-CBF.
We now make this intuition precise.

By applying the logic of the implication~\eqref{eqn:mrcbfalt}, we see that \eqref{eqn:mrcbflip} is true as long as for all  $(\mb{y},\widehat{\mb{x}})\in\widehat{\mb{v}}(\C)$,
\begin{equation}\label{eq:esp_y_bd}
      \epsilon(\mb y) < \max\left\{
      \frac{\Vert L_\mb{g}h(\widehat{\mb{ x}})\Vert_2}{\mathfrak{L}_{L_\mb{g}h}},
      \frac{L_\mb{f}h(\widehat{\mb{x}}) +  \alpha(h(\widehat{\mb{ x}}))}{\mathfrak{L}_{L_\mb{f}h} + \mathfrak{L}_{\alpha \circ h}}
      \right\}.
\end{equation}
This expression gives the maximum admissible error under which the MR-CBF condition will hold for a given set of dynamics and function $h$.
It is not easy to reason about this quantity directly, since the measurement-estimate set $\widehat{\mb{v}}(\C)$ is difficult to characterize without knowledge of $\mb{p}$.
Instead, we reformulate this expression to depend on the associated underlying state $\mb{x}$.
We do so by interpreting $\epsilon$ as a function of the state via $\epsilon(\mb{y}) = \epsilon(\mb{p}(\mb{x})) \triangleq \varepsilon(\mb{x})$ and appealing conceptually to the set $\widehat{\mathcal X}(\mb{x})$ defined in \eqref{eqn:Xhat} to consider all possible observations at a given state.

\begin{theorem}\label{thm:err_bd_x}
Let $\C$, $h$, $\mathfrak{L}_{L_\mb{f}h}$, $\mathfrak{L}_{L_\mb{g}h}$, $\mathfrak{L}_{\alpha\circ h}$, and $\epsilon$ be defined as in Theorem~\ref{thm:safety}.
Then $h$ is a MR-CBF for \eqref{eqn:eom} on $\C$ with parameter function $(\epsilon(\mb y)(\mathfrak{L}_{L_\mb{f}h}+\mathfrak{L}_{\alpha\circ h}),\epsilon(\mb{y})\mathfrak{L}_{L_\mb{g}h})$ if for all $\mb x \in \C$:
\begin{equation}\label{eq:upper_eps_x}
      \varepsilon(\mb{x}) < \max\left\{
      \frac{\Vert L_\mb{g}h({\mb{ x}})\Vert_2}{2\mathfrak{L}_{L_\mb{g}h}},
      \frac{L_\mb{f}h({\mb{x}}) +  \alpha(h({\mb{ x}}))}{2(\mathfrak{L}_{L_\mb{f}h} + \mathfrak{L}_{\alpha \circ h})}
      \right\} \triangleq \bar{\varepsilon}(\mb x)\:.
\end{equation}
\end{theorem}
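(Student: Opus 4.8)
The plan is to show that the state-space condition \eqref{eq:upper_eps_x} implies the measurement-space condition \eqref{eq:esp_y_bd}, since the discussion preceding the theorem already establishes that \eqref{eq:esp_y_bd} holding for all $(\mb{y},\widehat{\mb{x}})\in\widehat{\mb{v}}(\C)$ is sufficient for $h$ to be an MR-CBF with the stated parameter function. Thus it suffices to fix an arbitrary pair $(\mb{y},\widehat{\mb{x}})\in\widehat{\mb{v}}(\C)$ and verify \eqref{eq:esp_y_bd} there. First I would use injectivity of $\mb{p}$ (guaranteed by the left inverse $\mb{q}$) to associate to this pair the unique underlying state $\mb{x}=\mb{q}(\mb{y})\in\C$, so that $\mb{y}=\mb{p}(\mb{x})$ and $\widehat{\mb{x}}=\widehat{\mb{q}}(\mb{p}(\mb{x}))=\mb{x}+\mb{e}(\mb{x})$. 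The crucial quantitative input is the error bound $\Vert\widehat{\mb{x}}-\mb{x}\Vert_2=\Vert\mb{e}(\mb{x})\Vert_2\leq\max_{\mb{e}\in\mathcal{E}(\mb{y})}\Vert\mb{e}\Vert_2\leq\epsilon(\mb{y})=\varepsilon(\mb{x})$, which bounds the distance between the estimate and the true state by $\varepsilon(\mb{x})$.

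Next I would transfer the two expressions appearing inside the max of \eqref{eq:esp_y_bd} from $\widehat{\mb{x}}$ to $\mb{x}$ using the Lipschitz hypotheses. For the first term, the reverse triangle inequality together with Lipschitz continuity of $L_\mb{g}h$ gives
\begin{equation*}
\Vert L_\mb{g}h(\widehat{\mb{x}})\Vert_2 \geq \Vert L_\mb{g}h(\mb{x})\Vert_2 - \mathfrak{L}_{L_\mb{g}h}\Vert\widehat{\mb{x}}-\mb{x}\Vert_2 \geq \Vert L_\mb{g}h(\mb{x})\Vert_2 - \mathfrak{L}_{L_\mb{g}h}\varepsilon(\mb{x}).
\end{equation*}
For the second term, Lipschitz continuity of $L_\mb{f}h$ and $\alpha\circ h$ analogously yields
\begin{equation*}
L_\mb{f}h(\widehat{\mb{x}})+\alpha(h(\widehat{\mb{x}})) \geq L_\mb{f}h(\mb{x})+\alpha(h(\mb{x})) - (\mathfrak{L}_{L_\mb{f}h}+\mathfrak{L}_{\alpha\circ h})\varepsilon(\mb{x}).
\end{equation*}

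Finally I would split on which branch of the max in \eqref{eq:upper_eps_x} is active. If $\varepsilon(\mb{x})<\Vert L_\mb{g}h(\mb{x})\Vert_2/(2\mathfrak{L}_{L_\mb{g}h})$, then $\Vert L_\mb{g}h(\mb{x})\Vert_2>2\mathfrak{L}_{L_\mb{g}h}\varepsilon(\mb{x})$, and substituting into the first displayed bound gives $\Vert L_\mb{g}h(\widehat{\mb{x}})\Vert_2>\mathfrak{L}_{L_\mb{g}h}\varepsilon(\mb{x})$, i.e. $\epsilon(\mb{y})=\varepsilon(\mb{x})<\Vert L_\mb{g}h(\widehat{\mb{x}})\Vert_2/\mathfrak{L}_{L_\mb{g}h}$, so the first argument of the max in \eqref{eq:esp_y_bd} exceeds $\epsilon(\mb{y})$. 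The symmetric computation in the other case, using the second displayed bound, shows the second argument exceeds $\epsilon(\mb{y})$. In either case \eqref{eq:esp_y_bd} holds at $(\mb{y},\widehat{\mb{x}})$; the factor of $2$ in the denominators of \eqref{eq:upper_eps_x} is exactly what is needed to absorb the Lipschitz error term $\mathfrak{L}_{(\cdot)}\varepsilon(\mb{x})$ on top of the baseline requirement, which is why the two conditions match up precisely.

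The main obstacle I anticipate is not the inequality manipulation but a domain technicality: the Lipschitz bounds are hypothesized on $\C$, yet the estimate $\widehat{\mb{x}}$ may lie outside $\C$, so applying $\mathfrak{L}_{L_\mb{g}h}$, $\mathfrak{L}_{L_\mb{f}h}$, and $\mathfrak{L}_{\alpha\circ h}$ between $\mb{x}$ and $\widehat{\mb{x}}$ requires the Lipschitz estimates to remain valid on a neighborhood of $\C$ containing the reachable estimates $\widehat{\mb{q}}(\mb{p}(\C))$ (equivalently, along the segments from $\mb{x}$ to $\widehat{\mb{x}}$). I would handle this by invoking the Lipschitz property on this enlarged set, consistent with the evaluation of the same quantities at $\widehat{\mb{x}}$ already implicit in Theorem~\ref{thm:safety}; the only remaining care is ensuring the case split on the max is exhaustive, which it is because \eqref{eq:upper_eps_x} forces $\varepsilon(\mb{x})$ to be strictly below at least one of the two branches.
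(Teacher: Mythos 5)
Your proposal is correct and follows essentially the same route as the paper's proof: the paper likewise first establishes the measurement-space condition \eqref{eq:esp_y_bd} (re-deriving it for completeness) and then transfers it to the state-space condition \eqref{eq:upper_eps_x} via exactly your Lipschitz/reverse-triangle-inequality argument, with the factor of $2$ absorbing the term $\mathfrak{L}_{(\cdot)}\varepsilon(\mb{x})$. The domain technicality you flag (Lipschitz bounds applied at $\widehat{\mb{x}}$, which may lie outside $\C$) is real but is glossed over by the paper as well, so it does not distinguish your argument from theirs.
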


Even though the true states will not be observed during operation, this result
gives a sufficient condition on error with respect to states in the safe set $\mathcal C$.
Comparing this expression with the bound in~\eqref{eq:esp_y_bd}, there is an additional factor of $2$.
This corresponds to doubling the radius of the uncertainty set, which occurs because it is necessary to consider the entire set $\widehat{\mathcal{X}}(\mb x)$ for each $\mb x$.

We now draw an explicit connection between measurement model uncertainty and learning from data.
Recall that the error $\mb e(\mb x)$ is implicitly defined by the estimated $\widehat{\mb q}$, so it can be viewed as arising from imperfectly approximating the inverse map ${\mb q}$:
\begin{equation}\label{eqn:pw_err_bound}
    \mb{e}(\mb{x}) = \mb{x} - \widehat{\mb{q}}(\mb{p}(\mb{x}))= \mb{q}(\mb{y}) - \widehat{\mb{q}}(\mb{y}).
\end{equation}
Therefore, we treat $\varepsilon(\mb{x})$ as a pointwise error bound on the learned map $\widehat{\mb{q}}$ over the space of measurements.
In a supervised learning setting, this map arises from training data, which we denote as $\mathcal S =\{(\mb{y}_i, \bar{\mb{x}}_i)\}_{i=1}^N$. The recorded system outputs $\mb{y}_i$ play the role of the independent variables, while the corresponding recorded states $\bar{\mb{x}}_i$ play the role of dependent variables.
We denote the recorded states as $\bar{\mb{x}}_i$ to allow for the possibility that they are imperfect measurements of the state.
For example, the recorded values could be noisy according to some distribution $\mathcal{D}$:
\begin{equation}\label{eqn:noisy_x}
    \bar{\mb{x}}_i = \mb{x}_i + \mb{w}_i,\quad \mb{w}_i \simiid \mathcal{D}\:,
\end{equation}
for $i=1,\dots,N$.
This scenario corresponds to using a simple noisy sensor to calibrate a more complex sensor by learning the inverse map $\widehat{\mb q}$.
Denote the 
true states as
$\mathcal X_{\mathcal S} = \{\mb{x}_i\}_{i=1}^N$. 

Our work is motivated by instances of control-from-pixels, where the system outputs $\mb y$ are camera images \cite{codevilla2018end,lambert2018deep,tang2018aggressive}.
Therefore, we consider maps $\widehat{\mb q}$ which are non-parametric and thus highly data dependent, like neural networks.\footnote{While we suppose that $\widehat{\mb{q}}$ is learned entirely from data, our results are equally applicable to learning the \textit{residual errors} of an existing perception component (see, e.g. surveyed methods in~\cite{de2019learning}).
}
Though necessary for guaranteeing safety,
providing pointwise bounds on errors as in \eqref{eqn:pw_err_bound} is often not the focus of machine learning analyses, which favor a mean-error perspective.
We point to this issue to highlight an area of future work, and rely here on a simplified model.
The following definition models training data-dependent errors:

\begin{definition}\label{def:nonparametric}
A \textit{non-parametric error bound} with parameters $(L , \sigma)\in\R^2_+$ and hyperparameter $\gamma\in\R_+$ has the form:
\begin{equation}
    \Vert\mb{e}(\mb{x})\Vert_2 \leq L\gamma +  \frac{\sigma}{\sqrt{\#\{\mb x_i \in \mathcal X_{\mathcal S} \mid \Vert\mb{x} - \mb{x}_i\Vert\leq \gamma\}}}\:.
\end{equation}
\end{definition}

In Appendix~\ref{app:feasibleproof}, we relate this definition to a high probability bound for a particular class of non-parametric regression models analysed in \cite{dean2020robust} for the noisy sensor setting described by \eqref{eqn:noisy_x}. 
Combining this definition with Theorem~\ref{thm:err_bd_x} yields a condition on the way that training data is collected.
\begin{corollary}\label{coro:dense_data}
Consider the setting of Theorem~\ref{thm:err_bd_x} and suppose that $\widehat{\mb{q}}$ is learned with a method that satisfies the $(L, \sigma)$ non-parametric error bound with hyperparameter $\gamma$.
Further suppose that the training data is collected from states $\mathcal{X}_{\mathcal{S}}$ such that for all $\mb{x}\in\C$,
\begin{align}\label{eq:density_bd}
    \#\{\mb{x}_i \in \mathcal{X}_{\mathcal{S}} \mid \Vert\mb{x} - \mb{x}_i\Vert_2\leq \gamma\} > \frac{4\sigma^2}{ \left(\max\Big\{
    \frac{1}{\mathfrak{L}_{L_\mb{g}h} }\left \Vert  L_\mb{g}h(\mb{x}) \right\Vert_2,
    \frac{L_\mb{f}h(\mb{x}) +  \alpha(h(\mb{x}))}{(\mathfrak{L}_{L_\mb{f}h} + \mathfrak{L}_{\alpha \circ h})  }\Big\} - 2L\gamma\right)^{2}}\:.
    \end{align} 
Then $h$ is a valid MR-CBF satisfying the conditions of Theorem \ref{thm:safety}.
\end{corollary}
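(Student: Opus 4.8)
The plan is to reduce the corollary to Theorem~\ref{thm:err_bd_x} by showing that the data-density hypothesis~\eqref{eq:density_bd} is exactly the condition needed to make the pointwise error bound $\varepsilon(\mb x)$ strictly smaller than the admissible threshold $\bar{\varepsilon}(\mb x)$ of~\eqref{eq:upper_eps_x}. First I would take the non-parametric error bound of Definition~\ref{def:nonparametric} as the pointwise error bound itself, setting
\[
\varepsilon(\mb x) = L\gamma + \frac{\sigma}{\sqrt{N(\mb x)}}, \qquad N(\mb x) \triangleq \#\{\mb x_i \in \mathcal{X}_{\mathcal{S}} \mid \Vert \mb x - \mb x_i\Vert_2 \leq \gamma\},
\]
so that $\Vert\mb e(\mb x)\Vert_2 \leq \varepsilon(\mb x)$ holds for every $\mb x \in \C$, consistent with the error-bound hypothesis inherited from Theorem~\ref{thm:safety}.

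Next, I would abbreviate the right-hand side of~\eqref{eq:upper_eps_x} as $\bar{\varepsilon}(\mb x) = \tfrac12 M(\mb x)$, where $M(\mb x) \triangleq \max\{\Vert L_\mb{g}h(\mb x)\Vert_2/\mathfrak{L}_{L_\mb{g}h},\, (L_\mb{f}h(\mb x)+\alpha(h(\mb x)))/(\mathfrak{L}_{L_\mb{f}h}+\mathfrak{L}_{\alpha\circ h})\}$. The goal is then to show $\varepsilon(\mb x) < \bar{\varepsilon}(\mb x)$ for all $\mb x \in \C$, since Theorem~\ref{thm:err_bd_x} immediately certifies $h$ as a valid MR-CBF satisfying the conditions of Theorem~\ref{thm:safety} once this holds. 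The substance of the argument is then a short chain of equivalences,
\[
\varepsilon(\mb x) < \bar{\varepsilon}(\mb x) \iff \frac{\sigma}{\sqrt{N(\mb x)}} < \frac{M(\mb x) - 2L\gamma}{2} \iff N(\mb x) > \frac{4\sigma^2}{(M(\mb x) - 2L\gamma)^2},
\]
where the final expression is precisely the density bound~\eqref{eq:density_bd}. So the entire corollary follows by rearranging the target inequality and squaring.

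The one point requiring care --- and the only genuine obstacle --- is the legitimacy of the squaring step, which preserves the inequality only when $\tfrac12(M(\mb x) - 2L\gamma)$ is nonnegative. I would therefore state explicitly that the corollary operates in the regime $M(\mb x) > 2L\gamma$ for all $\mb x \in \C$: if this failed, the threshold $\bar{\varepsilon}(\mb x) = \tfrac12 M(\mb x)$ would be at most $L\gamma \leq \varepsilon(\mb x)$, so no quantity of data could force $\varepsilon(\mb x) < \bar{\varepsilon}(\mb x)$, and~\eqref{eq:density_bd} could not be reconciled with the conclusion. Read this way, $2L\gamma < M(\mb x)$ is a smallness requirement on the hyperparameter $\gamma$ and the Lipschitz constant $L$ of the learned map relative to the robustness margin $M(\mb x)$; under it the equivalences above are valid, and invoking Theorem~\ref{thm:err_bd_x} completes the proof.
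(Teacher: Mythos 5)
Your proof is correct and follows essentially the same route as the paper's: combine the non-parametric error bound of Definition~\ref{def:nonparametric} with the sufficient condition~\eqref{eq:upper_eps_x} of Theorem~\ref{thm:err_bd_x}, then rearrange to obtain the density condition~\eqref{eq:density_bd}. Your explicit flagging of the regime $M(\mb{x}) > 2L\gamma$ for all $\mb{x}\in\C$ --- needed so that the squaring step is valid and so that the density hypothesis can actually deliver the conclusion --- is a point the paper's proof leaves implicit behind the phrase ``the result follows by rearranging terms,'' and is a genuine improvement in rigor.
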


The statement above characterizes a density condition for data used to calibrate complex sensors.
To ensure the feasibility of the MR-CBF procedure, training data should be sampled non-uniformly according to the right hand side of~\eqref{eq:density_bd}. Future work will seek to connect this sampling scheme with dynamically feasible trajectory planning targeted towards data density.

\section{Simulation Results}
\label{sec:result}

In this section we present simulation results using MR-CBFs and data-driven learning models on a simulated robotic Segway platform. 

The Segway can be seen in Figure \ref{fig:cameraFeed_worstCase}, and is modeled with system dynamics derived using the unconstrained Euler-Lagrange Equations. The system is constrained to planar motion by providing identical input torques about both wheels. The resulting degrees of freedom are the Segway's horizontal position $r$, horizontal velocity $\dot{r}$, pitch angle $\theta_y$, and pitch rate $\dot{\theta}_y$. The nominal controller $\mb{k}_d$ is a simple proportional-derivative (PD) controller as in \cite{taylor2020control}. The simulation was written in a Robot Operating System (ROS) based environment using a mixture of C++ and Python to mimic the code structure of the existing hardware platform \cite{gurriet2018towards}.\footnote{The full simulation code
can be found at {\scriptsize \url{https://github.com/rkcosner/cyberpod_sim_ros.git}}
}

The safe set for the simulation was defined as  $\mathcal{C} = \{ \mb{x} \in \R^7 : h_1(\mb{x}) \geq 0,\; h_2(\mb{x})\geq 0 \}  $ with:
\begin{align}
    \label{eqn:extended_CBFs}
    h_1 = - \dot{\theta}_y +  \alpha_e (c - \theta_y + \theta_y^\star) \quad \quad  & \quad \quad h_2 = \dot{\theta}_y + \alpha_e (c + \theta_y - \theta_y^\star) 
\end{align}
where $c\in \R_{++}$, $\alpha_e \in \R_{++}$, and $\theta^\star_y$ is the pitch angle at equilibrium. The MR-OP Filter constraint in (\ref{eqn:MR-OP}) was applied simultaneously to both safety functions (\ref{eqn:extended_CBFs}) and was then implemented using the ECOS Second Order Cone Problem solver \cite{domahidi2013ecos}. The Lipschitz constants in this constraint were estimated by sampling $\mathfrak{L}_{L_\mb{f}h}$, $\mathfrak{L}_{L_\mb{g}h}$ and $\alpha \circ h$ on a set of gridded values around the system's equilibrium point by taking the maximum of the slopes between any two adjacent grid points. As a baseline comparison, a CBF-QP Filter (\ref{eqn:CBF-QP}) was implemented and applied using both safety functions (\ref{eqn:extended_CBFs}). We considered the two following testing scenarios:
\begin{figure*}[t]
    \centering
        \begin{subfloat}
          {\includegraphics[width=0.5\textwidth, valign=t]{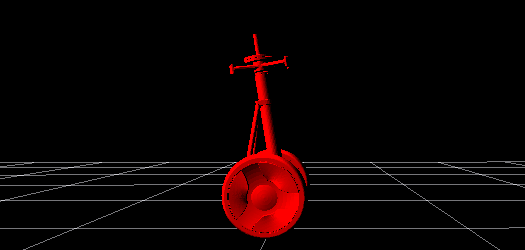}}
          \end{subfloat}
          \begin{subfloat}
          {\includegraphics[width=0.48\textwidth, valign=t]{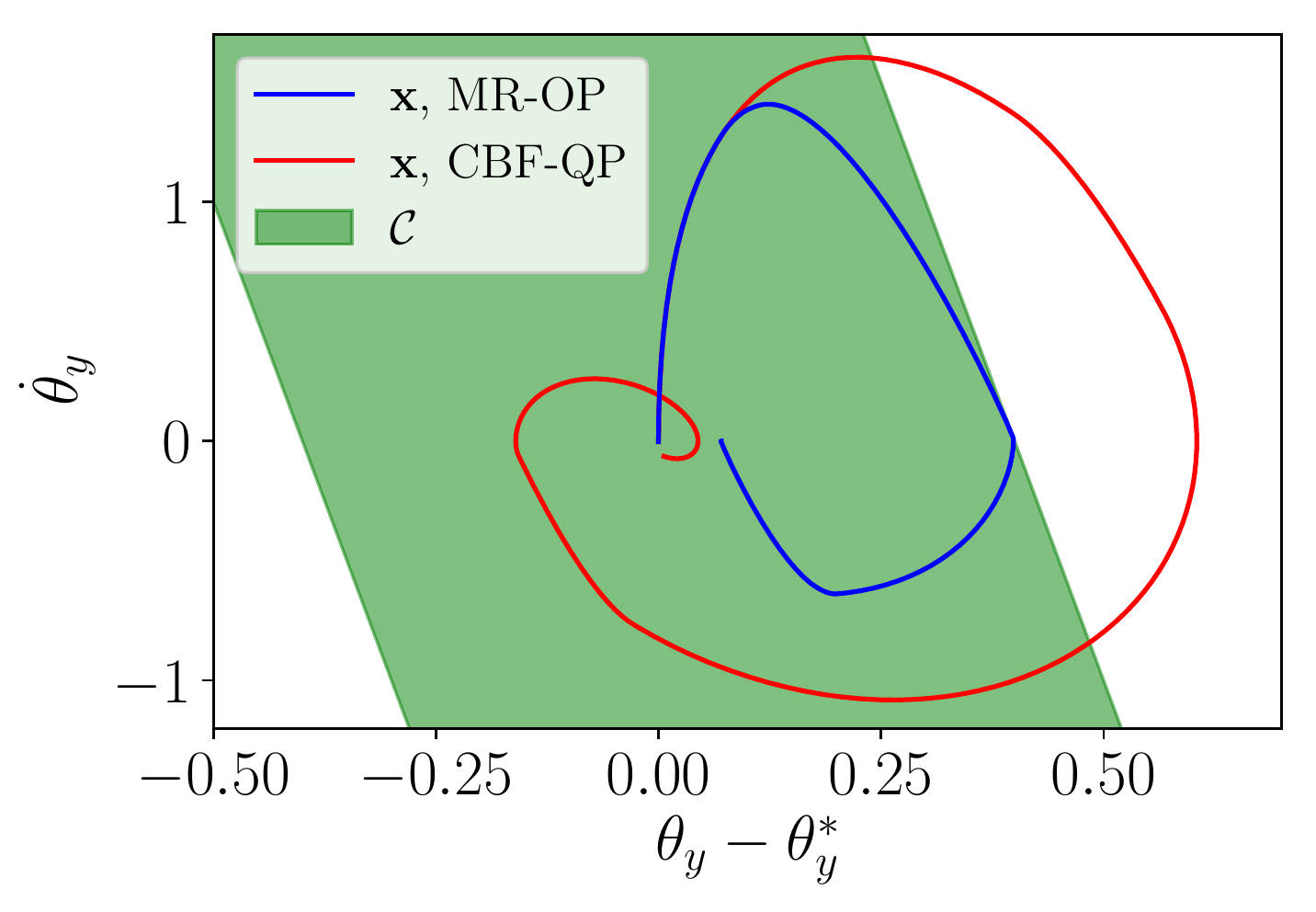}}
          \end{subfloat}
    \caption{ \textbf{(Left)} The Segway model used in simulation from the perspective of the fixed virtual camera used to estimate its state. \textbf{(Right)} Simulation results for worst-case measurement model uncertainty of $\epsilon=0.2$ subtracted from the true pitch angle $\theta_y$ when measured. A state trajectory generated using the the Standard CBF Filter (red) and the MR-OP Filter (blue) are shown as projections onto their pitch angle and pitch rate components. The safe set is plotted in green. Given the same initial condition, the MR-OP filter ensured safety of the trajectory whereas the Standard CBF Filter did not.}
    \label{fig:cameraFeed_worstCase}
\end{figure*}

\paragraph{Worst-Case Synthetic Measurement Model Uncertainty:} In this testing scenario we assumed that direct measurements of the pitch angle $\theta_y$ were offset by a constant factor of $\epsilon>0$, such that $\widehat{\theta}_y=\theta_y-\epsilon$.
Implementing the MR-OP Filter for $\epsilon>0$ ensures safety for this worst-case error of up to $\epsilon$. The result of this type of worst-case measurement model uncertainty in the Segway system with a standard CBF-QP Filter and an MR-OP Filter can be seen in Figure \ref{fig:cameraFeed_worstCase}.

\paragraph{Data-Driven Sensor Calibration:}
In this scenario a more realistic form of measurement model uncertainty is introduced through the use of a learned model to estimate the position $r$ and pitch angle $\theta_y$ from camera images. In simulation, a virtual camera and lighting source were implemented to provide a 15 Hz video feed with a fixed perspective, an example of which can be seen in Figure \ref{fig:cameraFeed_worstCase}. The labels for this supervised-learning problem were noisy measurements of the position and pitch angle generated by the system's inertial measurement unit, corrupted by Gaussian noise with standard deviation $0.1$. We use \texttt{sklearn}'s Kernel Ridge Regression with radial basis functions \citep{pedregosa2011scikit} trained using a set of 800 labelled images associated with a gridded range of position and pitch angle values to ensure dense coverage.
The hyperparameter values $\alpha=0, \gamma=5.4\times 10^{-8}$ were selected to minimize the average error on an 80\%~-~20\% random train-test split. 

The result of the learning-induced errors in the Segway system with a standard CBF Filter and an MR-OP Filter with $\epsilon=0.2$ can be seen in Figure \ref{eqn:fig:cameraFeed_learningCase}.
In the left panel, the safe set $\C$ is shaded according to the upper bound on error $\bar{\varepsilon}(\mb x)$ under which feasibility is guaranteed.
As the errors in the learned map do not exceed this value empirically, the set $\C$ is rendered invariant.
The expression for $\bar{\varepsilon}(\mb x)$ in this experimental scenario is presented in a result analogous to Theorem~\ref{thm:err_bd_x} in Appendix~\ref{app:experiments}, along with an empirical validation of learned model errors.

\begin{figure*}[t]
    \centering
    \begin{subfloat}
    {\includegraphics[width=0.45\textwidth, valign = t]{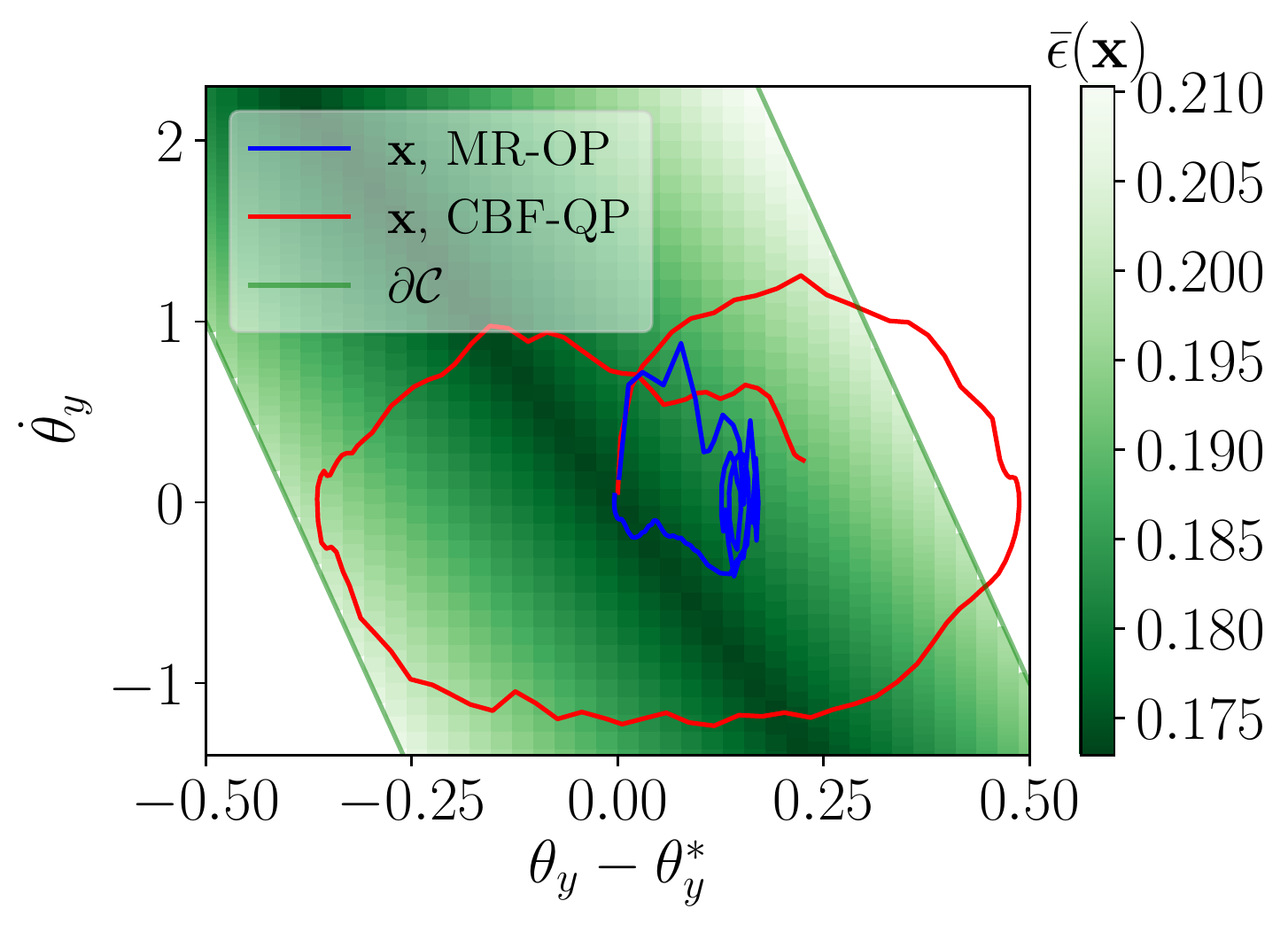}}
    \end{subfloat}
    \begin{subfloat}
    {\includegraphics[width=0.48\textwidth, valign = t]{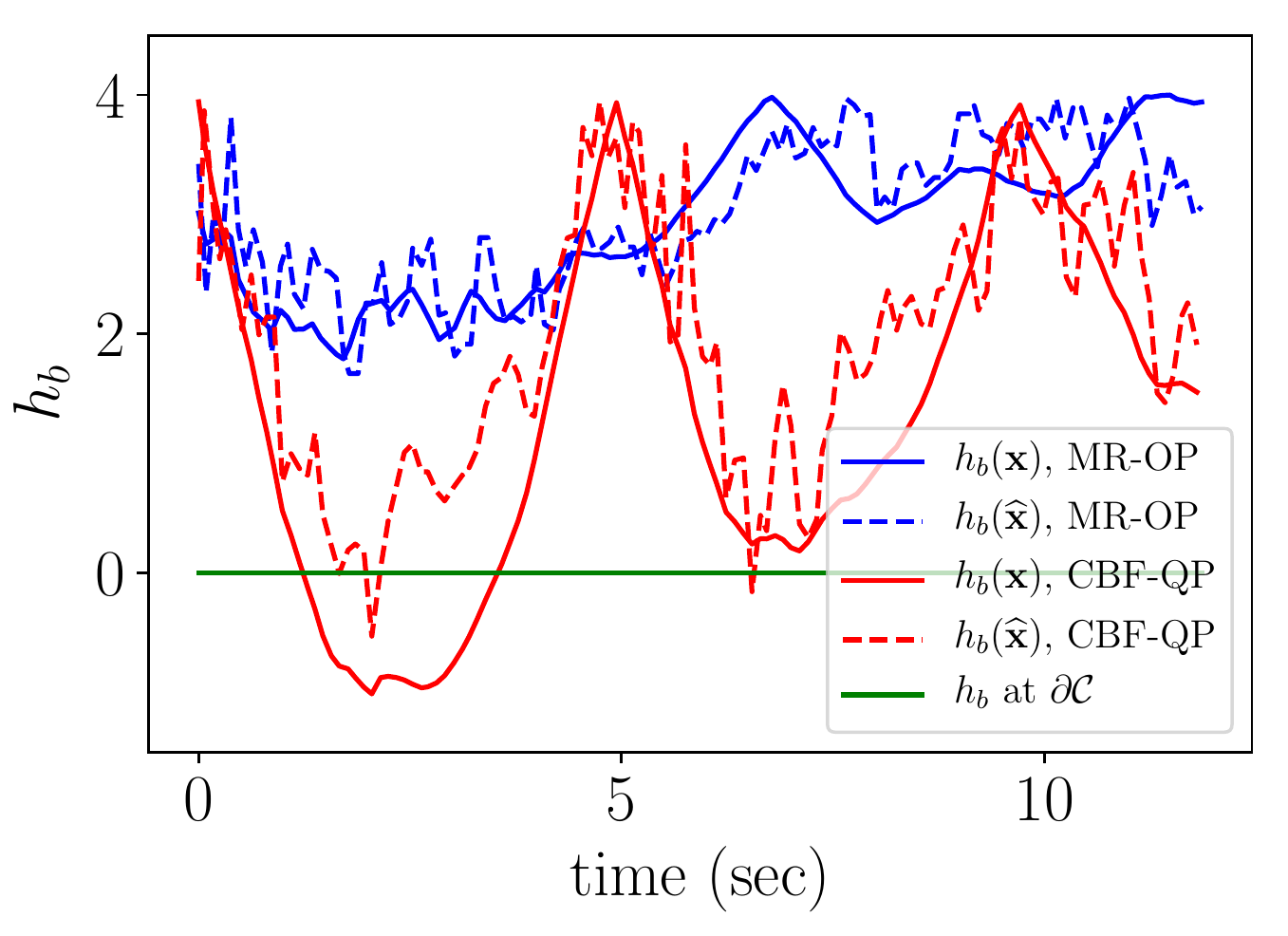}}
    \end{subfloat}
    \caption{Simulation results demonstrating the ability of the MR-OP Filter to mitigate the impact of imperfect learned perception models on safety. \textbf{(Left)} The state trajectory generated using the Standard CBF Filter (red) and MR-OP Filter (blue) are shown as projections onto their pitch angle and pitch rate components. Given the same initial condition, the MR-OP Filter generated a safe trajectory whereas the Standard CBF-QP Filter did not. The state estimates for each trajectory had a maximum error of 0.183 and 0.201 respectively. \textbf{(Right)} The Boolean composition, $h_b = \min \{ h_{e1}, h_{e2}\} = h_{e1} \land h_{e2}$ as defined in \citep{glotfelter2018boolean}, is plotted for the CBF-QP Filter and MR-OP Filter trajectories for the true and estimated states, $\mb{x}$ (solid line) and $\widehat{\mb{x}}$ (dotted line). The safety violation of the Standard CBF-QP Filter can be seen where $h_b(\mb{x})$ crosses 0.}
\label{eqn:fig:cameraFeed_learningCase}
\end{figure*}

\section{Conclusion}
In conclusion, we presented the notion of a Measurement-Robust Control Barrier Function as a tool for ensuring safety in the presence of error in the model relating measurements to state estimates. The resulting safety condition required by a MR-CBF can be directly incorporated into an optimization based controller as a second order cone constraint, preserving the convexity of typical CBF based controllers. We explore how worst case error in learned measurement models can be quantified in terms of data density, and demonstrate how learning-based perception models can be trained on noisy state data and incorporated with MR-CBFs to achieve safe, perception based control.

Future work will seek to explore the concepts in this paper from both a practical and a theoretical standpoint. From a practical perspective, we will demonstrate the feasibility of MR-CBFs and perception-based control on unstable systems such as the Segway in real world experiments. From a theoretical perspective, we will explore the impact of noise, targeted data acquisition through trajectory generation, Lipschitz properties of the resulting optimization based controllers, and strategies for non-invertible observation models such as dynamic estimators (Kalman filters).

\clearpage
\acknowledgments{
We thank the anonymous reviewers for helpful feedback, and Andrew Singletary for his work developing the Segway simulation environment.
This research is generously supported in part by ONR awards N00014-20-1-2497 and N00014-18-1-2833, NSF CPS award 1931853, and the DARPA Assured Autonomy program (FA8750-18-C-0101), and a gift from Twitter.
SD is supported by an NSF Graduate Research Fellowship under Grant No. DGE 1752814. AT is supported by DARPA award HR00111890035.
}

\bibstyle{plain}
\bibliography{taylor_main} 
\clearpage

\appendix
\section{Measurement-Robust Control Barrier Function Safety Proof}
\label{app:mrcbfproof}
In this appendix we provide a proof of Theorem \ref{thm:safety} relating MR-CBFs to safety.

Recall that Theorem \ref{thm:safety} is given by:
\begin{theorem*}[2]
Let a set $\C\subset\R^n$ be defined as the 0-superlevel set of a continuously differentiable function $h:\R^n\to\R$. Assume the functions $L_\mb{f}h:\R^n\to\R$, $L_\mb{g}h:\R^n\to\R^m$, and $\alpha\circ h:\R^n\to\R$ are Lipschitz continuous on $\C$ with Lipschitz coefficients $\mathfrak{L}_{L_\mb{f}h}$, $\mathfrak{L}_{L_\mb{g}h}$, and $\mathfrak{L}_{\alpha\circ h}$, respectively. Further assume there exists a locally Lipschitz function $\epsilon:\R^k\to\R_+$, such that $\max_{\mb{e}\in\mathcal{E}(\mb{y})}\Vert\mb{e}\Vert_2\leq\epsilon(\mb{y})$ for all $\mb{y} \in \mb{p}(\C)$. If $h$ is an MR-CBF for \eqref{eqn:eom} on $\C$ with parameter function $(\epsilon(\mb{y})(\mathfrak{L}_{L_\mb{f}h}+\mathfrak{L}_{\alpha\circ h}),\epsilon(\mb{y})\mathfrak{L}_{L_\mb{g}h})$, then any locally Lipschitz continuous controller $\mb{k}:\R^k\times\R^n\to\R^m$, such that $\mb{k}(\mb{y},\widehat{\mb{x}})\in K_{\textrm{mr-cbf}}(\mb{y},\widehat{\mb{x}})$ for all $(\mb{y},\widehat{\mb{x}})\in\widehat{\mb{v}}(\C)$, renders the system \eqref{eqn:cloop} safe with respect to the set $\C$.
\end{theorem*}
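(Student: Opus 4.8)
The plan is to reduce the statement to the already-established safety result Theorem~\ref{thm:cbf_safe} by exhibiting an effective \emph{state-feedback} controller that lies in $K_{\textrm{cbf}}(\mb{x})$ at every true state $\mb{x}\in\C$. To this end, I would first define the composed controller $\widetilde{\mb{k}}(\mb{x}) \triangleq \mb{k}(\mb{p}(\mb{x}),\widehat{\mb{q}}(\mb{p}(\mb{x})))$; since $\mb{p}$, $\widehat{\mb{q}}$, and $\mb{k}$ are each locally Lipschitz continuous, $\widetilde{\mb{k}}$ is a locally Lipschitz state-feedback controller and the closed-loop system~\eqref{eqn:cloop} under $\widetilde{\mb{k}}$ is well-posed. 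The crux is then to show that for every $\mb{x}\in\C$ the value $\mb{u}=\widetilde{\mb{k}}(\mb{x})$ satisfies $L_\mb{f}h(\mb{x})+L_\mb{g}h(\mb{x})\mb{u}\geq-\alpha(h(\mb{x}))$, i.e.\ $\widetilde{\mb{k}}(\mb{x})\in K_{\textrm{cbf}}(\mb{x})$.

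Fix $\mb{x}\in\C$ and write $\mb{y}=\mb{p}(\mb{x})$, $\widehat{\mb{x}}=\widehat{\mb{q}}(\mb{y})$, and $\mb{u}=\mb{k}(\mb{y},\widehat{\mb{x}})$, so that $(\mb{y},\widehat{\mb{x}})\in\widehat{\mb{v}}(\C)$ and $\mb{u}\in K_{\textrm{mr-cbf}}(\mb{y},\widehat{\mb{x}})$ by hypothesis. The error obeys $\Vert\widehat{\mb{x}}-\mb{x}\Vert_2=\Vert\mb{e}(\mb{x})\Vert_2\leq\max_{\mb{e}\in\mathcal{E}(\mb{y})}\Vert\mb{e}\Vert_2\leq\epsilon(\mb{y})$. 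I would then compare the true-state and estimate-state quantities using the three Lipschitz bounds: $L_\mb{f}h(\mb{x})\geq L_\mb{f}h(\widehat{\mb{x}})-\mathfrak{L}_{L_\mb{f}h}\epsilon(\mb{y})$, $\alpha(h(\mb{x}))\geq\alpha(h(\widehat{\mb{x}}))-\mathfrak{L}_{\alpha\circ h}\epsilon(\mb{y})$, and $L_\mb{g}h(\mb{x})\mb{u}\geq L_\mb{g}h(\widehat{\mb{x}})\mb{u}-\mathfrak{L}_{L_\mb{g}h}\epsilon(\mb{y})\Vert\mb{u}\Vert_2$, the last following from Cauchy--Schwarz applied to $(L_\mb{g}h(\mb{x})-L_\mb{g}h(\widehat{\mb{x}}))\mb{u}$. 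The elegant point is that the total slack introduced by summing these three bounds equals $\epsilon(\mb{y})(\mathfrak{L}_{L_\mb{f}h}+\mathfrak{L}_{\alpha\circ h})+\epsilon(\mb{y})\mathfrak{L}_{L_\mb{g}h}\Vert\mb{u}\Vert_2$, which is \emph{exactly} the robustness margin $a(\mb{y})+b(\mb{y})\Vert\mb{u}\Vert_2$ appearing in the MR-CBF condition. Hence $L_\mb{f}h(\mb{x})+L_\mb{g}h(\mb{x})\mb{u}+\alpha(h(\mb{x}))$ is bounded below by the left-hand side of the $K_{\textrm{mr-cbf}}$ inequality evaluated at $(\mb{y},\widehat{\mb{x}})$, which is nonnegative precisely because $\mb{u}\in K_{\textrm{mr-cbf}}(\mb{y},\widehat{\mb{x}})$, giving $\widetilde{\mb{k}}(\mb{x})\in K_{\textrm{cbf}}(\mb{x})$.

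It remains to verify that $h$ is a genuine CBF on $\C$ so that Theorem~\ref{thm:cbf_safe} applies. This follows by the same chain of Lipschitz estimates applied to the strict supremum in the MR-CBF definition: taking the supremum over $\mb{u}$ of both sides of the per-input inequality above transfers the strict inequality~\eqref{eqn:mr-cbf} at $(\mb{y},\widehat{\mb{x}})$ into $\sup_{\mb{u}}L_\mb{f}h(\mb{x})+L_\mb{g}h(\mb{x})\mb{u}>-\alpha(h(\mb{x}))$, which is the CBF condition at $\mb{x}$. With $h$ a CBF on $\C$, $\widetilde{\mb{k}}$ locally Lipschitz, and $\widetilde{\mb{k}}(\mb{x})\in K_{\textrm{cbf}}(\mb{x})$ for all $\mb{x}\in\C$, Theorem~\ref{thm:cbf_safe} renders~\eqref{eqn:cloop} safe with respect to $\C$.

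The main obstacle I anticipate is technical rather than conceptual: the Lipschitz estimates must be valid for the pair $(\mb{x},\widehat{\mb{x}})$, yet the estimate $\widehat{\mb{x}}=\widehat{\mb{q}}(\mb{p}(\mb{x}))$ need not lie in $\C$, where Lipschitz continuity is assumed. I would resolve this by reading the Lipschitz hypothesis as holding on a set containing both $\C$ and its image $\widehat{\mb{q}}(\mb{p}(\C))$ (equivalently, on a neighborhood of $\C$ large enough to contain all admissible state estimates), which is the natural regularity assumption implicit in the theorem. The remaining care is purely in sign bookkeeping---ensuring each Lipschitz bound is applied in the direction that reproduces the nonpositive robustness term $-(a(\mb{y})+b(\mb{y})\Vert\mb{u}\Vert_2)$ with $a$ and $b$ as specified in the parameter function.
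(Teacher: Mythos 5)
Your proposal is correct and takes essentially the same route as the paper: both proofs bound the difference between the barrier terms at the true state $\mb{x}$ and at the estimate $\widehat{\mb{x}}$ by $\epsilon(\mb{y})(\mathfrak{L}_{L_\mb{f}h}+\mathfrak{L}_{\alpha\circ h}+\mathfrak{L}_{L_\mb{g}h}\Vert\mb{u}\Vert_2)$ using the Lipschitz assumptions, observe that the MR-CBF margin $a(\mb{y})+b(\mb{y})\Vert\mb{u}\Vert_2$ exactly absorbs this gap, and conclude safety via Theorem~\ref{thm:cbf_safe}. The paper packages the identical estimate as an infimum of $c(\mb{x},\mb{u})=L_\mb{f}h(\mb{x})+L_\mb{g}h(\mb{x})\mb{u}+\alpha(h(\mb{x}))$ over the uncertainty set $\mathcal{X}(\mb{y})$ rather than working directly at the single true state, and it leaves implicit the two points you carefully flag (that the Lipschitz bounds are applied at $\widehat{\mb{x}}$, which may lie outside $\C$, and that Theorem~\ref{thm:cbf_safe} formally requires verifying $h$ is a CBF), so your treatment is, if anything, slightly more careful.
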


\begin{proof}[Proof of Theorem \ref{thm:safety}]
Define the function $c:\R^n\times\R^m\to\R$ as
    \begin{align*}
        c(\mb{x},\mb{u}) & = \derp{h}{\mb{x}}(\mb{x})\bigg(\mb{f}(\mb{x}) + \mb{g}(\mb{x}) \mb{u}\bigg) + \alpha (h( \mb{x})) ,\\
        & = L_\mb{f}h(\mb{x}) + L_\mb{g}h(\mb{x})\mb{u} + \alpha (h(\mb{x})).
    \end{align*}
This proof will follow from Theorem \ref{thm:cbf_safe}, in that for any $\mb{x}\in\C$, with $(\mb{y},\widehat{\mb{x}})=\widehat{\mb{v}}(\mb{x})$, we will show:
\begin{equation}
\label{eqn:ccond}
    c(\mb{x},\mb{k}(\mb{y},\widehat{\mb{x}}))\geq 0.
\end{equation}

To show that \eqref{eqn:ccond} is true, consider a measurement-state estimate pair $(\mb{y},\widehat{\mb{x}})\in\widehat{\mb{v}}(\C)$. A sufficient condition for \eqref{eqn:ccond} to hold is given by:
\begin{equation*} 
    \inf_{\mb{x}\in\mathcal{X}(\mb{y})}c(\mb{x},\mb{k}(\mb{y},\widehat{\mb{x}})) \geq 0\:.
\end{equation*}
Recalling that we define $\widehat{\mb x} = \mb x + \mb e(\mb x)$, we have:
\begin{align*} 
    \inf_{\mb{x}\in\mathcal{X}(\mb{y})}c(\mb{x},\mb{k}(\mb{y},\widehat{\mb{x}}))  &= \inf_{\mb{e}\in\mathcal{E}(\mb{y})} c(\widehat{\mb x}-\mb e,\mb{k}(\mb{y},\widehat{\mb{x}})),\\
     &= c(\widehat{\mb x},\mb{k}(\mb{y},\widehat{\mb{x}})) + \inf_{\mb{e}\in\mathcal{E}(\mb{y})}
     c(\widehat{\mb x}-\mb e,\mb{k}(\mb{y},\widehat{\mb{x}})) - c(\widehat{\mb x},\mb{k}(\mb{y},\widehat{\mb{x}})),\\
     &\geq c(\widehat{\mb x},\mb{k}(\mb{y},\widehat{\mb{x}})) - \sup_{\mb{e}\in\mathcal{E}(\mb{y})}
     \vert c(\widehat{\mb x}-\mb e,\mb{k}(\mb{y},\widehat{\mb{x}})) - c(\widehat{\mb x},\mb{k}(\mb{y},\widehat{\mb{x}}))\vert.
\end{align*} 
The assumption on Lipschitz continuity of $L_\mb{f}h$, $L_\mb{g}h$, and $\alpha\circ h$ enables the following bound:
    \begin{align*}
    \vert c(\mb{x'},\mb{u}) - c(\mb{x},\mb{u}) \vert & = \vert L_\mb{f}h(\mb{x'})- L_\mb{f}h(\mb{x}) + L_\mb{g}h(\mb{x'})\mb{u}  - L_\mb{g}h(\mb{x})\mb{u} + \alpha(h(\mb{x'})) - \alpha(h(\mb{x}))\vert, \\
    & \leq \vert L_\mb{f}h(\mb{x'})- L_\mb{f}h(\mb{x})\vert + \vert L_\mb{g}h(\mb{x'})\mb{u}  - L_\mb{g}h(\mb{x})\mb{u}\vert + \vert\alpha(h(\mb x')) - \alpha(h(\mb{x}))\vert, \\
    & \leq \mathfrak{L}_{L_\mb{f}h} \lVert\mb{x'} - \mb{x}\rVert_2 + \lVert L_\mb{g}h(\mb{x'}) - L_\mb{g}h(\mb{x})\rVert_2 \lVert\mb{u}\rVert_2 + \mathfrak{L}_{\alpha \circ h} \lVert\mb{x' } - \mb{x}\rVert_2,\\
    & \leq (\mathfrak{L}_{L_\mb{f}h} + \mathfrak{L}_{L_\mb{g}h}\lVert\mb{u}\rVert_2 + \mathfrak{L}_{\alpha \circ h}) \lVert\mb{x'} - \mb{x}\rVert_2.
    \end{align*}
Therefore, using the definition of $\epsilon(\mb y)$ we have:
    \begin{align*}
        \sup_{\mb{e}\in\mathcal{E}(\mb{y})}
     \vert c(\widehat{\mb x}-\mb e,\mb{k}(\mb{y},\widehat{\mb{x}})) - c(\widehat{\mb x},\mb{k}(\mb{y},\widehat{\mb{x}}))\vert &\leq 
     \sup_{\mb{e}\in\mathcal{E}(\mb{y})}  (\mathfrak{L}_{L_\mb{f}h} + \mathfrak{L}_{L_\mb{g}h}\lVert\mb{k}(\mb{y},\widehat{\mb{x}})\rVert_2 + \mathfrak{L}_{\alpha \circ h}) \lVert\mb e\rVert_2, \\
     &\leq (\mathfrak{L}_{L_\mb{f}h} + \mathfrak{L}_{L_\mb{g}h}\lVert\mb{k}(\mb{y},\widehat{\mb{x}})\rVert_2 + \mathfrak{L}_{\alpha \circ h}) \epsilon(\mb y).
    \end{align*}
Thus:
\begin{align*} 
    \inf_{\mb{x}\in\mathcal{X}(\mb{y})}c(\mb{x},\mb{k}(\mb{y},\widehat{\mb{x}}))  \geq c(\widehat{\mb x},\mb{k}(\mb{y},\widehat{\mb{x}})) - (\mathfrak{L}_{L_\mb{f}h} + \mathfrak{L}_{L_\mb{g}h}\lVert\mb{k}(\mb{y},\widehat{\mb{x}})\rVert_2 + \mathfrak{L}_{\alpha \circ h}) \epsilon(\mb y).
\end{align*} 
By the MR-CBF condition and the design of $\mb{k}$ we have that:
    \begin{align*}
        c(\widehat{\mb x},\mb{k}(\mb{y},\widehat{\mb{x}})) - \epsilon(\mb y)(\mathfrak{L}_{L_\mb{f}h} + \mathfrak{L}_{\alpha \circ h})  - \epsilon(\mb y)\mathfrak{L}_{L_\mb{g}h}\lVert\mb{k}(\mb{y},\widehat{\mb{x}})\rVert_2 \geq 0, 
    \end{align*}
implying the condition \eqref{eqn:ccond}.
  
\end{proof}

\section{MR-OP Second Order Cone Conversion}
\label{app:mropsocp}
In this appendix we show how the optimization problem specified by \eqref{eqn:MR-OP} can be written in the standard form for a Second-Order Cone Program.
The MR-OP Filter is given by:
\begin{align}
    \tag{MR-OP}
    \mb{k}(\mb{y},\widehat{\mb{x}}) =  \,\,\underset{\mb{u} \in \R^m}{\argmin}  &  \quad \frac{1}{2} \Vert \mb{u}-\mb{k}_d(\widehat{\mb{x}}) \Vert_2^2  \\
    \mathrm{s.t.} \quad & \quad L_\mb{f}h(\widehat{\mb{x}})-(\mathfrak{L}_{L_\mb{f}h}+\mathfrak{L}_{\alpha\circ h})\epsilon(\mb{y})+L_\mb{g}h(\widehat{\mb{x}})\mb{u}-\mathfrak{L}_{L_\mb{g}h}\epsilon(\mb{y})\Vert\mb{u}\Vert_2\geq-\alpha(h(\widehat{\mb{x}})). \nonumber
\end{align}
First, the constant term can be removed from the cost such that it becomes: $\frac{1}{2}\Vert\mb{u} \Vert_2^2 - \mb{k}_d(\widehat{\mb{x}})^\top \mb{u} $. Additionally, the constraint can be written in terms of a second order cone:  
\begin{equation}
\quad \mb{q} = \lmat L_\mb{g}h(\widehat{\mb{x}})  \\ \mathfrak{L}_{L_\mb{g}h}\epsilon(\mb{y})  I_m \rmat \mb{u}   + \lmat \alpha(h(\widehat{\mb{x}})) + L_\mb{f}h(\widehat{\mb{x}})-(\mathfrak{L}_{L_\mb{f}h}+\mathfrak{L}_{\alpha\circ h})\epsilon(\mb{y})\\ 0_m %
\rmat \nonumber,  \quad \mb{q} \in Q^{m+1}, 
\end{equation}
where $Q^{m+1} \triangleq \{ (q_0, \mb{q}_1) \in \R\times \R^{m} ~|~ q_0 \geq \Vert\mb{q}_1\Vert_2 \} $ is the second-order cone in $\R^{m+1}$. Furthermore, by adding the decision variable $t \in \R$, the quadratic cost function can be converted to an equivalent linear cost, $t -  \mb{k}_d(\widehat{\mb{x}})^T\mb{u}$, with a rotated second order cone constraint, $\Vert \mb{u}\Vert^2_2 \leq 2t$, that can be converted to a standard second order cone constraint via a rotation matrix $\mb{R} \in \R^{(m+2)\times (m+2)}$ \citep{calafiore2014optimization}. Thus the additional constraint can be written as:
\begin{equation}
    \mb{r} = \mb{R}\lmat t & 1 & \mb{u}^\top \rmat^\top  \nonumber, \;   \mb{r} \in Q^{m+2}.  \\
\end{equation}
Combining the two second order cone constraints, the problem can be written as:
\begin{align}
    \mb{k}(\mb{y},\widehat{\mb{x}}) =   \,\,\underset{(\mb{u},t,\mb{s}) \in \R^{3m+4}}{\argmin} &  \quad \lmat 1 & 0 & -\mb{k}_d(\widehat{\mb{x}})^\top \rmat \lmat t & 1 & \mb{u}^\top \rmat^\top \nonumber\\
     \mathrm{s.t.}&  \quad \mb{G}(\mb{y},\widehat{\mb{x}})   \lmat t \\1 \\ \mb{u} \rmat  + \mb{s} = \mb{h}(\mb{y}, \widehat{\mb{x}}),  \nonumber\\
    & \quad \mb{s} \in Q^{m+2} \times Q^{m+1}, \nonumber
\end{align}
where:
\begin{align*}
    \mb{G}(\mb{y},\widehat{\mb{x}}) &= - \lmat  \mb{R}_1 & 0 &  \mb{R}_{3:m+2} \\   0 & 0 & L_\mb{g}h(\widehat{\mb{x}}) \\ \mb{0}_m & \mb{0}_m & \mathfrak{L}_{L_\mb{g}h}\epsilon(\mb{y}) \mb{I}_m  \rmat, \\ \mb{h}(\mb{y},\widehat{\mb{x}}) & = \lmat \mb{R}_2\\ \alpha(h(\widehat{\mb{x}})) + L_\mb{f}h(\widehat{\mb{x}})-(\mathfrak{L}_{L_\mb{f}h}+\mathfrak{L}_{\alpha\circ h})\epsilon(\mb{y})\\ \mb{0}_m     \rmat.
\end{align*}
Here $\mb{R}_i$ represents the $i^\textrm{th}$ column of $\mb{R}$, and $\mb{R}_{3:m+2}$ represents the columns 3 through $m+2$. This equivalent formulation of (\ref{eqn:MR-OP}) is in standard Second-Order Cone Program form as in \citep{domahidi2013ecos}.

\section{R-MR-OP Lipschitz Continuity Proof}
\label{app:rmroplipschitz}
In this appendix we show that the controller \eqref{eqn:R-MR-OP} is locally Lipschitz continuous in terms of the true state $\mb{x}$. Recall this controller is given by:
\begin{align}
\tag{R-MR-OP}
\mb{k}(\mb{y},\widehat{\mb{x}}) =  \,\,\underset{(\mb{u},\delta) \in \R^m\times\R}{\argmin}  &  \quad \frac{1}{2} \Vert \mb{u}-\mb{k}_d(\widehat{\mb{x}}) \Vert_2^2 + p\delta^2  \\
\mathrm{s.t.} \quad  L_\mb{f}h&(\widehat{\mb{x}})-(\mathfrak{L}_{L_\mb{f}h}+\mathfrak{L}_{\alpha\circ h})\epsilon(\mb{y})+L_\mb{g}h(\widehat{\mb{x}})\mb{u}-\mathfrak{L}_{L_\mb{g}h}\epsilon(\mb{y})\Vert\mb{u}\Vert_2\geq-\alpha(h(\widehat{\mb{x}}))-\delta. \nonumber
\end{align}
where $\mb{y}=\mb{p}(\mb{x})$ and $\widehat{\mb{q}}(\mb{p}(\mb{x}))$, with $\mb{p}$ and $\widehat{\mb{q}}$ locally Lipschitz continuous. Thus if we prove that $\mb{k}$ is locally Lipschitz with respects to its arguments, the composition of $\mb{k}$ with the measurement functions will be locally Lipschitz. We also note that this optimization is always feasible, with a unique minimizer as it is convex.

The constraint in this optimization problem can be restated as a conic constraint:
\begin{align}
\tag{R-MR-OP}
\mb{k}(\mb{y},\widehat{\mb{x}}) =  \,\,\underset{(\mb{u},\delta) \in \R^m\times\R}{\argmin}  &  \quad \frac{1}{2} \Vert \mb{u}-\mb{k}_d(\widehat{\mb{x}}) \Vert_2^2 + p\delta^2  \\
\mathrm{s.t.} \quad & \quad \begin{bmatrix}L_\mb{g}h(\widehat{\mb{x}})\mb{u} + \delta + \alpha(h(\widehat{\mb{x}})) + L_\mb{f}h(\widehat{\mb{x}})-(\mathfrak{L}_{L_\mb{f}h}+\mathfrak{L}_{\alpha\circ h})\epsilon(\mb{y}) \\ \mathfrak{L}_{L_\mb{g}h}\epsilon(\mb{y})\mb{u} \end{bmatrix} \in Q^{m+1}. \nonumber
\end{align}
To prove local Lipschitz continuity of this controller, we will make use of Theorem 4 in \cite{dontchev1993lipschitzian}, stated below. To draw parallels with the notation of this work, we define the following:
\begin{align}
\mb{w} &\triangleq (\mb{u},\delta) \in\R^{m+1}, \\
C_{(\mb{y},\widehat{\mb{x}})}(\mb{w}) 
&\triangleq  \frac{1}{2} \Vert \mb{u}-\mb{k}_d(\widehat{\mb{x}}) \Vert_2^2 + p\delta^2, \label{eqn:C} \\
G_{(\mb{y},\widehat{\mb{x}})}(\mb{w}) &\triangleq \begin{bmatrix}L_\mb{g}h(\widehat{\mb{x}})\mb{u} + \delta + \alpha(h(\widehat{\mb{x}})) + L_\mb{f}h(\widehat{\mb{x}})-(\mathfrak{L}_{L_\mb{f}h}+\mathfrak{L}_{\alpha\circ h})\epsilon(\mb{y}), \\ \mathfrak{L}_{L_\mb{g}h}\epsilon(\mb{y})\mb{u} \end{bmatrix}      \label{eqn:G}
\end{align}
The optimization problem can then be written as:
\begin{align}
\label{eqn:CGRMROP}
\mb{k}(\mb{y},\widehat{\mb{x}}) =  \,\,\underset{\mb{w} \in \R^{m+1}}{\argmin}  &  \quad C_{(\mb{y},\widehat{\mb{x}})}(\mb{w})  \\
\mathrm{s.t.} \quad & \quad G_{(\mb{y},\widehat{\mb{x}})}(\mb{w}) \in Q^{m+1}. \nonumber
\end{align}
We note that $G_{(\mb{y},\widehat{\mb{x}})}:\R^{m+1}\to\R^{m+1}$, with $\R^{m+1}$ a Banach space, and that $Q^{m+1}\subset\R^{m+1}$ is a closed, convex cone with its vertex at the origin. We further have that $C_{(\mb{y},\widehat{\mb{x}})}$ and $G_{(\mb{y},\widehat{\mb{x}})}$ are twice differentiable with respect to $\mb{w}$, and these derivatives are locally Lipschitz continuous with respect to $(\mb{y},\widehat{\mb{x}})$ by local Lipschitz continuity of $\mb{k}_d$, $L_\mb{f}h$, $L_\mb{g}h$, $\alpha\circ h$, and $\epsilon$.

The $H_{(\mb{y},\widehat{\mb{x}})}$ denote the Lagrangian:
\begin{equation}
    \label{eqn:H}
H_{(\mb{y},\widehat{\mb{x}})}(\mb{w},\bs{\lambda}) = C_{(\mb{y},\widehat{\mb{x}})}(\mb{w})-\langle\bs{\lambda},G_{(\mb{y},\widehat{\mb{x}})}(\mb{w})\rangle,
\end{equation}
where $\bs{\lambda}\in\R^m$. The first-order necessary conditions associated with a solution $(\mb{w}^\star_{(\mb{y},\widehat{\mb{x}})},\bs{\lambda}^\star_{(\mb{y},\widehat{\mb{x}})})$ to \eqref{eqn:CGRMROP} can be expressed as:
\begin{align}
  \label{eqn:foc}
  \derp{H_{(\mb{y},\widehat{\mb{x}})}}{\mb{w}}(\mb{w}^\star_{(\mb{y},\widehat{\mb{x}})},\bs{\lambda}^\star_{(\mb{y},\widehat{\mb{x}})}) & = \mb{0}, \\
  \langle \bs{\lambda}_{(\mb{y},\widehat{\mb{x}})}^\star,G_{(\mb{y},\widehat{\mb{x}})}(\mb{w}^\star_{(\mb{y},\widehat{\mb{x}})}) \rangle & = 0, \\
  G_{(\mb{y},\widehat{\mb{x}})}(\mb{w}^\star_{(\mb{y},\widehat{\mb{x}})}) &\in  Q^{m+1^+},
\end{align}
where $\mb{w}^\star_{(\mb{y},\widehat{\mb{x}})}\in\R^m+1$ and $\bs{\lambda}^\star_{(\mb{y},\widehat{\mb{x}})}\in Q^{m+1}$. We also have that the following coercivity condition is met:
\begin{equation}
\label{eqn:coerc}
    \left\langle \derp{H_{(\mb{y},\widehat{\mb{x}})}}{\mb{w}}(\mb{w}^\star_{(\mb{y},\widehat{\mb{x}})},\bs{\lambda}^\star_{(\mb{y},\widehat{\mb{x}})})(\mb{v}_2-\mb{v}_1),\mb{v}_2-\mb{v}_1 \right\rangle \geq \Vert\mb{v}_2-\mb{v}_1\Vert_2^2,
\end{equation}
for any $\mb{v}_1,\mb{v}_2\in\R^{m+1}$ as:
\begin{equation}
    \derp{H_{(\mb{y},\widehat{\mb{x}})}}{\mb{w}}(\mb{w}^\star_{(\mb{y},\widehat{\mb{x}})},\bs{\lambda}^\star_{(\mb{y},\widehat{\mb{x}})}) = \mb{I}_{m+1},
\end{equation}
for all $(\mb{y},\widehat{\mb{x}})$. Lastly, we note that:
\begin{equation}
\label{eqn:surj}
    \derp{G_{(\mb{y},\widehat{\mb{x}})}}{\mb{w}}(\mb{w}^\star_{(\mb{y},\widehat{\mb{x}})}) = \begin{bmatrix} L_\mb{g}h(\widehat{\mb{x}}) & 1 \\ \mathfrak{L}_{L_\mb{g}h}\epsilon(\mb{y})\mb{I}_{m} & \mb{0}_{m\times 1} \end{bmatrix}.
\end{equation}
Under the assumption that $\epsilon(\mb{y})\neq 0$ for all $\mb{y}\in\mb{p}(\C)$ (or that every measurement has some amount of worst case error), we have the map in  $\derp{G_{(\mb{y},\widehat{\mb{x}})}}{\mb{w}}(\mb{w}^\star_{(\mb{y},\widehat{\mb{x}})})$ is surjective from $\R^{m+1}$ to $\R^{m+1}$. Thus we have that all of the conditions of the following theorem are met:
\begin{theorem}[Lipschitz Continuity of SOCP \cite{dontchev1993lipschitzian}]
If ~$\derp{G_{(\mb{y},\widehat{\mb{x}})}}{\mb{w}}(\mb{w}^\star_{(\mb{y},\widehat{\mb{x}})})$ is surjective and the coercivity condition \eqref{eqn:coerc} holds, then there exists $s>0$ such that \eqref{eqn:CGRMROP} has a strict local minimizer $\mb{w}^\star_ {(\mb{y},\widehat{\mb{x}})}$ for each $(\mb{y}',\widehat{\mb{x}}')\in B_s((\mb{y},\widehat{\mb{x}}))$, and both $\mb{w}^\star_{(\mb{y},\widehat{\mb{x}})}$, and the associated (unique) multiplier $\bs{\lambda}^\star_{(\mb{y},\widehat{\mb{x}})}$ satisfying the first-order necessary condition \eqref{eqn:foc}, are Lipschitz continuous functions of $(\mb{y}',\widehat{\mb{x}}')\in B_s((\mb{y},\widehat{\mb{x}}))$.
\end{theorem}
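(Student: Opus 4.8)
The plan is to treat the statement as an instance of the implicit-function theory for parametric generalized equations, and to verify that the two hypotheses — surjectivity of $\derp{G_{(\mb{y},\widehat{\mb{x}})}}{\mb{w}}(\mb{w}^\star)$ and the coercivity \eqref{eqn:coerc} — are exactly the conditions that render the optimality map \emph{strongly regular} at the base parameter in the sense of Robinson. First I would introduce the parameter $p=(\mb{y},\widehat{\mb{x}})$ and the pair $\mb{z}=(\mb{w},\bs{\lambda})$, and recast the first-order conditions \eqref{eqn:foc} as a generalized equation $0 \in F_p(\mb{z}) + N(\mb{z})$, where $F_p$ collects the Lagrangian gradient $\derp{H_p}{\mb{w}}$ together with the map $-G_p$, and $N$ encodes the conic complementarity $Q^{m+1}\ni G_p(\mb{w}) \perp \bs{\lambda}\in Q^{m+1}$ through the normal cone of the (self-dual) second-order cone. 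A solution of this generalized equation at the base parameter is precisely the pair $(\mb{w}^\star_p,\bs{\lambda}^\star_p)$.

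The key steps are then threefold. First, I would use the surjectivity of $\derp{G_p}{\mb{w}}$ displayed in \eqref{eqn:surj} as Robinson's constraint qualification: since $G_p$ is affine in $\mb{w}$ with full row-rank Jacobian, the linearized feasibility system is consistent and the stationarity equation pins down $\bs{\lambda}^\star_p$ uniquely, giving uniqueness of the multiplier. Second, I would use the coercivity \eqref{eqn:coerc}: because $C_p$ is a strongly convex quadratic and $G_p$ is affine, the Hessian of the Lagrangian in $\mb{w}$ is uniformly positive definite with modulus $1$, so the strong second-order sufficient condition holds in its subspace-independent form. Third, I would combine these to show that the \emph{linearized} generalized equation — obtained by replacing $F_p$ with its first-order expansion about $(\mb{w}^\star_p,\bs{\lambda}^\star_p)$ — admits a single-valued Lipschitz localization, which is precisely the strong-regularity property.

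With strong regularity established, I would invoke Robinson's implicit function theorem for generalized equations: strong regularity at the base parameter, together with joint Lipschitz continuity of $p\mapsto(C_p,G_p)$ and their derivatives — guaranteed by the assumed local Lipschitz continuity of $\mb{k}_d$, $L_\mb{f}h$, $L_\mb{g}h$, $\alpha\circ h$, and $\epsilon$ — yields some $s>0$ and a single-valued Lipschitz map $p'\mapsto(\mb{w}^\star_{p'},\bs{\lambda}^\star_{p'})$ on $B_s(p)$ solving \eqref{eqn:foc}. Strictness of the minimizer for each $p'$ follows from the strong second-order condition inherited from \eqref{eqn:coerc}, and composing with the locally Lipschitz measurement maps $\mb{p}$ and $\widehat{\mb{q}}$ transfers the Lipschitz property onto the true state $\mb{x}$, closing the argument of Appendix \ref{app:rmroplipschitz}.

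The main obstacle is establishing strong regularity for the \emph{non-polyhedral} cone $Q^{m+1}$: unlike the polyhedral case, the normal-cone operator of the second-order cone is not piecewise affine, so the usual reduction to an invertible piecewise-linear normal map is unavailable, and Robinson's localization must be verified directly. I would resolve this by exploiting the special structure of \eqref{eqn:CGRMROP} — affine $G_p$ and strongly convex quadratic $C_p$ — which makes the program strictly convex with a unique global minimizer for every feasible $p'$ (feasibility ensured by the slack $\delta$); the uniform positive definiteness from \eqref{eqn:coerc} then supplies a quantitative strong-monotonicity estimate on the primal variable while the surjectivity in \eqref{eqn:surj} controls the dual, together delivering the Lipschitz modulus directly. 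Alternatively, one may appeal to second-order-cone-specific characterizations of strong regularity, under which the surjectivity condition furnishes the required nondegeneracy.
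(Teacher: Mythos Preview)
The paper does not prove this statement at all: it is quoted verbatim as Theorem~4 of \cite{dontchev1993lipschitzian} and used as a black box. The surrounding text in Appendix~\ref{app:rmroplipschitz} merely \emph{verifies the hypotheses} --- it computes $\derp{G_{(\mb{y},\widehat{\mb{x}})}}{\mb{w}}$ in \eqref{eqn:surj} and checks surjectivity under $\epsilon(\mb{y})\neq 0$, computes the Hessian of the Lagrangian and checks \eqref{eqn:coerc} --- and then invokes the cited theorem to conclude local Lipschitz continuity of $(\mb{y},\widehat{\mb{x}})\mapsto(\mb{w}^\star,\bs{\lambda}^\star)$. There is no independent argument in the paper for the theorem itself.

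Your proposal, by contrast, sketches an actual proof of the cited result via Robinson's strong regularity for parametric generalized equations. That is indeed the machinery underlying \cite{dontchev1993lipschitzian}, so your outline is faithful to the original source, not to the paper under review. In that sense you have written more than the paper does: the paper's ``proof'' consists solely of hypothesis verification followed by citation. If your task is to match the paper, the correct response is simply to note that the statement is an external result invoked without proof; if your task is to supply a proof the paper omits, your Robinson/strong-regularity route is the right one, and your identification of the non-polyhedral cone as the main technical obstacle is accurate.
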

As $(\mb{y},\widehat{\mb{x}})$ were arbitrary, and are locally Lipschitz continuous functions of the true state $\mb{x}$, we have that $\mb{k}$ is locally Lipschitz with respect to the state.

\section{Learning and Uncertainty Reduction Results}
\label{app:feasibleproof}

In this section, we provide a proof of Theorem~\ref{thm:err_bd_x}, give a motivation for the definition of the nonparametric error bound, and provide a proof of Corollary~\ref{coro:dense_data}.

Recall that Theorem~\ref{thm:err_bd_x} is given by:
\begin{theorem*}[\ref{thm:err_bd_x}]
Let $\C$, $h$, $\mathfrak{L}_{L_\mb{f}h}$, $\mathfrak{L}_{L_\mb{g}h}$, $\mathfrak{L}_{\alpha\circ h}$, and $\epsilon$ be defined as in Theorem~\ref{thm:safety}.
Then $h$ is a MR-CBF for \eqref{eqn:eom} on $\C$ with parameter function $(\epsilon(\mb y)(\mathfrak{L}_{L_\mb{f}h}+\mathfrak{L}_{\alpha\circ h}),\epsilon(\mb{y})\mathfrak{L}_{L_\mb{g}h})$ if for all $\mb x \in \C$:
\begin{equation*}
      \varepsilon(\mb{x})< \max\left\{
      \frac{\Vert L_\mb{g}h({\mb{ x}})\Vert_2}{2\mathfrak{L}_{L_\mb{g}h}},
      \frac{L_\mb{f}h({\mb{x}}) +  \alpha(h({\mb{ x}}))}{2(\mathfrak{L}_{L_\mb{f}h} + \mathfrak{L}_{\alpha \circ h})}
      \right\} \triangleq \bar{\varepsilon}(\mb x)\:.
\end{equation*}
\end{theorem*}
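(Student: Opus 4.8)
The plan is to show that the state-space condition~\eqref{eq:upper_eps_x} is sufficient for the measurement-space condition~\eqref{eq:esp_y_bd}, which the discussion preceding the theorem already identified as sufficient for $h$ to satisfy the MR-CBF condition~\eqref{eqn:mrcbflip} with the stated parameter function. So I would fix an arbitrary pair $(\mb{y},\widehat{\mb{x}})\in\widehat{\mb{v}}(\C)$ and argue that~\eqref{eq:esp_y_bd} holds there. By construction of $\widehat{\mb{v}}$, there is a true state $\mb{x}\in\C$ with $\mb{y}=\mb{p}(\mb{x})$ and $\widehat{\mb{x}}=\mb{x}+\mb{e}(\mb{x})$, so that $\Vert\widehat{\mb{x}}-\mb{x}\Vert_2=\Vert\mb{e}(\mb{x})\Vert_2\leq\epsilon(\mb{y})=\varepsilon(\mb{x})$; conceptually $\widehat{\mb{x}}$ ranges over the ball $\widehat{\mathcal{X}}(\mb{x})$ of radius $\varepsilon(\mb{x})$ about $\mb{x}$ from~\eqref{eqn:Xhat}. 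The goal is then to lower-bound the two arguments of the maximum in~\eqref{eq:esp_y_bd}, which are evaluated at the unobserved estimate $\widehat{\mb{x}}$, by quantities evaluated at the true state $\mb{x}$, using only the Lipschitz hypotheses inherited from Theorem~\ref{thm:safety} and the distance bound $\Vert\widehat{\mb{x}}-\mb{x}\Vert_2\leq\varepsilon(\mb{x})$.

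I would handle the two terms separately, corresponding to the two branches of the maximum in~\eqref{eq:upper_eps_x}. For the first branch, the reverse triangle inequality together with Lipschitz continuity of $L_\mb{g}h$ gives $\Vert L_\mb{g}h(\widehat{\mb{x}})\Vert_2\geq\Vert L_\mb{g}h(\mb{x})\Vert_2-\mathfrak{L}_{L_\mb{g}h}\varepsilon(\mb{x})$, so assuming $\varepsilon(\mb{x})<\frac{\Vert L_\mb{g}h(\mb{x})\Vert_2}{2\mathfrak{L}_{L_\mb{g}h}}$ forces $\Vert L_\mb{g}h(\widehat{\mb{x}})\Vert_2>\mathfrak{L}_{L_\mb{g}h}\varepsilon(\mb{x})=\mathfrak{L}_{L_\mb{g}h}\epsilon(\mb{y})$, i.e. $\epsilon(\mb{y})<\frac{\Vert L_\mb{g}h(\widehat{\mb{x}})\Vert_2}{\mathfrak{L}_{L_\mb{g}h}}$. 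For the second branch, Lipschitz continuity of $L_\mb{f}h$ and of $\alpha\circ h$ gives $L_\mb{f}h(\widehat{\mb{x}})+\alpha(h(\widehat{\mb{x}}))\geq L_\mb{f}h(\mb{x})+\alpha(h(\mb{x}))-(\mathfrak{L}_{L_\mb{f}h}+\mathfrak{L}_{\alpha\circ h})\varepsilon(\mb{x})$, so assuming $\varepsilon(\mb{x})<\frac{L_\mb{f}h(\mb{x})+\alpha(h(\mb{x}))}{2(\mathfrak{L}_{L_\mb{f}h}+\mathfrak{L}_{\alpha\circ h})}$ forces $\epsilon(\mb{y})<\frac{L_\mb{f}h(\widehat{\mb{x}})+\alpha(h(\widehat{\mb{x}}))}{\mathfrak{L}_{L_\mb{f}h}+\mathfrak{L}_{\alpha\circ h}}$. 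In both branches the factor of $2$ emerges because the quantity $\mathfrak{L}\,\varepsilon(\mb{x})$ appears twice: once as the Lipschitz correction that shifts the estimate $\widehat{\mb{x}}$ back toward the true state, and once as the robustness margin $\epsilon(\mb{y})\mathfrak{L}$ that must be strictly overcome.

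Finally I would stitch the branches together using the maximum. Since~\eqref{eq:upper_eps_x} places $\varepsilon(\mb{x})$ strictly below $\bar{\varepsilon}(\mb{x})$, it lies strictly below at least one of the two state-space thresholds; the matching step above then produces a strict lower bound on the \emph{corresponding} argument of the maximum in~\eqref{eq:esp_y_bd}, and since a maximum dominates each of its arguments we conclude that $\epsilon(\mb{y})$ lies strictly below that maximum. As $(\mb{y},\widehat{\mb{x}})$ was arbitrary in $\widehat{\mb{v}}(\C)$, condition~\eqref{eq:esp_y_bd} holds throughout $\widehat{\mb{v}}(\C)$, which is exactly what is needed. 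I expect the only genuinely delicate point to be the bookkeeping of this maximum: one must be careful that selecting a single branch at $\mb{x}$ suffices even though both thresholds are evaluated at the shifted point $\widehat{\mb{x}}$ in~\eqref{eq:esp_y_bd}. This works because each branch is controlled by a self-contained inequality that never references the other branch, so dominating a single argument of the $\widehat{\mb{x}}$-maximum is enough and no cross-branch comparison at $\widehat{\mb{x}}$ is required.
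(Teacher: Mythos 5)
Your proposal is correct and takes essentially the same route as the paper's proof: the paper likewise reduces the claim to the measurement-space bound \eqref{eq:esp_y_bd} (re-deriving it for completeness via a change of variables in the supremum) and then uses exactly your Lipschitz/reverse-triangle argument, branch by branch, to show that the factor-of-2 state-space condition \eqref{eq:upper_eps_x} forces \eqref{eq:esp_y_bd} to hold at the estimate $\widehat{\mb{x}}$. The only difference is cosmetic: you cite the discussion preceding the theorem for \eqref{eq:esp_y_bd}, whereas the paper's appendix spells that step out as well.
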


\begin{proof}
For completeness, we begin by proving the condition~\eqref{eq:esp_y_bd}.
The function $h$ is a MR-CBF if
\begin{equation}\label{eq:mr_cbf_cond_app}
     \sup_{\mb{u}\in\R^m} L_\mb{g}h(\widehat{\mb{x}})\mb{u} - b(\mb{y}) \Vert\mb{u}\Vert_2 >-L_\mb{f}h(\widehat{\mb{x}})-\alpha(h(\widehat{\mb{x}})) + a(\mb{y}).
\end{equation}
Considering the left hand side, and doing a change of variables with $s=\|\mb u\|_2$ and $\mb v = \mb u/\|\mb u\|_2$,
\begin{align*}
    \sup_{\mb{u}\in\R^m} L_\mb{g}h(\widehat{\mb{x}})\mb{u} - b(\mb{y}) \Vert\mb{u}\Vert_2 &= \sup_{s>0}
    s\cdot \left(\max_{\Vert\mb v\Vert_2 = 1} L_\mb{g}h(\widehat{\mb{x}})\mb v-b(\mb{y}) \right), \\
    &= \sup_{s>0}
    s\cdot \left(
    \Vert L_\mb{g}h(\widehat{\mb{x}})\Vert_2
    -b(\mb{y}) 
    \right).
\end{align*}
If $\Vert L_\mb{g}h(\widehat{\mb{x}})\Vert_2
    -b(\mb{y}) > 0$, then $s$ can be chosen arbitrarily large, implying~\eqref{eq:mr_cbf_cond_app} holds. If $\Vert L_\mb{g}h(\widehat{\mb{x}})\Vert_2
    -b(\mb{y}) \leq 0$, then \eqref{eq:mr_cbf_cond_app} only holds if
    \[0 >-L_\mb{f}h(\widehat{\mb{x}})-\alpha(h(\widehat{\mb{x}})) + a(\mb{y}).\]
Substituting $a(\mb y)=\epsilon(\mb{y})\mathfrak{L}_{L_\mb{g}h}$ and $b(\mb y)=\epsilon(\mb{y})(\mathfrak{L}_{L_\mb{f}h} + \mathfrak{L}_{\alpha \circ h})$ and combining these conditions, we see that \eqref{eq:mr_cbf_cond_app} is satisfied if and only if

\begin{equation*}
      \epsilon(\mb{y}) = \varepsilon(\mb x) < \max\left\{
      \frac{\Vert L_\mb{g}h(\widehat{\mb{ x}})\Vert_2}{\mathfrak{L}_{L_\mb{g}h}},
      \frac{L_\mb{f}h(\widehat{\mb{x}}) +  \alpha(h(\widehat{\mb{ x}}))}{(\mathfrak{L}_{L_\mb{f}h} + \mathfrak{L}_{\alpha \circ h})}
      \right\}\:.
\end{equation*}

It remains to find a sufficient condition in which the right hand side depends only on $\mb x$.
Notice that
\begin{align*}
    \Vert L_\mb{g}h(\mb{x})\Vert_2 = \Vert L_\mb{g}h(\widehat{\mb{x}} - \mb{e}(\mb x))\Vert_2 \leq \Vert L_\mb{g}h(\widehat{\mb{x}})\Vert_2 + \mathfrak{L}_{L_\mb{g}h}\epsilon,
\end{align*}
therefore,
\begin{align*}
    \Vert L_\mb{g}h(\mb{x})\Vert_2 > 2 \mathfrak{L}_{L_\mb{g}h}\varepsilon(\mb x) \implies 
    \Vert L_\mb{g}h(\widehat{\mb{x}})\Vert_2 >  \mathfrak{L}_{L_\mb{g}h}\varepsilon(\mb x).
\end{align*}
By a similar argument, we also have that
\[
L_\mb{f}h({\mb{x}}) +  \alpha(h({\mb{ x}})) >  2(\mathfrak{L}_{L_\mb{f}h} + \mathfrak{L}_{\alpha \circ h})\varepsilon(\mb x)
\implies 
L_\mb{f}h(\widehat{\mb{x}}) +  \alpha(h(\widehat{\mb{ x}})) >  (\mathfrak{L}_{L_\mb{f}h} + \mathfrak{L}_{\alpha \circ h})\varepsilon(\mb x).
\]
Combining these two expressions gives a sufficient condition depending only on $\mb x$ and yields the desired result.
\end{proof}

\subsection{Motivation for Non-Parametric Error Bound}

Consider the simple Nadarya-Watson non-parametric regressor defined from a set of training data $\mathcal S =\{(\mb{y}_i, \bar{\mb{x}}_i)\}_{i=1}^N$
 as
\begin{align}
\begin{split}\label{eq:regression_estimator}
\widehat{\mb q}(\mb y) = \sum_{i=1}^N \frac{\mathbf{1}\{\rho(\mb y_i,\mb y)\leq\alpha\}}{s_N(\mb y)} \cdot \bar{\mb x}_i ,\quad s_N(\mb y) = \sum_{i=1}^N  \mathbf{1}\{\rho(\mb y_i,\mb y)\leq\alpha\}\:,
\end{split}
\end{align}
where $\rho:\R^k\times \R^k\to \R_+$ is a metric with respect to which the functions $\mb p$ and $\mb q$ are smooth:
\begin{align}
\begin{split}\label{eq:smoothness_rho}
 \rho(\mb p(\mb x), \mb p(\mb x'))\leq \mathfrak{L}_{\mb p} \|\mb x - \mb x'\|_2,
 \quad 
 \|\mb q(\mb y) - \mb q(\mb y')\|_2 \leq \mathfrak{L}_{\mb q} \rho(\mb y, \mb y')\:.
\end{split}
\end{align}
This regressor performs local averaging over training data to predict the underlying state as a function of the system output $\mb y$.

\begin{lemma}[Adapted from~\cite{dean2020robust}]\label{lem:nw_err_nd}
For a system satisfying~\eqref{eq:smoothness_rho} and a learned perception map of the form~\eqref{eq:regression_estimator} with noisy labels as in~\eqref{eqn:noisy_x} with $\mathbb{E}[\mb w_i] = 0$ and $\mathbb{E}[\mb w_i \mb w_i^\top] = \sigma_w^2$, we have with probability at least $1-\delta$ that for any fixed $\mb y$ with $s_N(\mb y)\neq 0$,
\begin{align}
\|\widehat{\mb q}(\mb y) - {\mb q}(\mb y)\|_\infty &\leq \alpha \mathfrak{L}_{\mb q} +  \frac{n\sigma_w}{\sqrt{s_N(\mb y)}} \sqrt{n\log\left(n\sqrt{s_N(\mb y)}/\delta\right)}\:.
\end{align}
\end{lemma}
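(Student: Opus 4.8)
The plan is to prove the bound through the standard bias--variance decomposition for local-averaging (Nadaraya--Watson) estimators, isolating a deterministic approximation term controlled by the smoothness of $\mb{q}$ from a zero-mean stochastic term controlled by a Gaussian tail bound. Writing $I(\mb{y}) \triangleq \{i : \rho(\mb{y}_i,\mb{y}) \leq \alpha\}$ for the set of active training indices, and using the identity $\mb{q}(\mb{y}_i) = \mb{x}_i$ (which follows from $\mb{q}\circ\mb{p} = \mathrm{id}$) together with the noise model $\bar{\mb{x}}_i = \mb{x}_i + \mb{w}_i$, I would first write the estimator error as
\[
\widehat{\mb{q}}(\mb{y}) - \mb{q}(\mb{y}) = \frac{1}{s_N(\mb{y})}\sum_{i \in I(\mb{y})}\big(\mb{q}(\mb{y}_i) - \mb{q}(\mb{y})\big) + \frac{1}{s_N(\mb{y})}\sum_{i \in I(\mb{y})}\mb{w}_i .
\]
The first sum is purely deterministic (the bias) and the second is zero-mean noise (the variance term); I would bound each separately and recombine with the triangle inequality in $\Vert\cdot\Vert_\infty$.

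For the bias term I would invoke the smoothness assumption \eqref{eq:smoothness_rho} on $\mb{q}$. Since $\Vert\cdot\Vert_\infty \leq \Vert\cdot\Vert_2$ and the average is a convex combination, the triangle inequality gives
\[
\Big\Vert \frac{1}{s_N(\mb{y})}\sum_{i \in I(\mb{y})}\big(\mb{q}(\mb{y}_i) - \mb{q}(\mb{y})\big)\Big\Vert_\infty \leq \frac{1}{s_N(\mb{y})}\sum_{i \in I(\mb{y})}\mathfrak{L}_{\mb{q}}\,\rho(\mb{y}_i,\mb{y}) \leq \alpha\mathfrak{L}_{\mb{q}},
\]
where the final step uses $\rho(\mb{y}_i,\mb{y}) \leq \alpha$ for every active index $i \in I(\mb{y})$. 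This recovers exactly the first term of the claimed bound, with no probabilistic content.

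For the noise term, the key observation I would exploit is that the label noise $\{\mb{w}_i\}$ is independent of the sample locations $\{\mb{y}_i\}$, and hence of both the active set $I(\mb{y})$ and the count $s_N(\mb{y})$. Conditioning on the locations, each coordinate of $s_N(\mb{y})^{-1}\sum_{i \in I(\mb{y})}\mb{w}_i$ is a zero-mean Gaussian with variance $\sigma_w^2/s_N(\mb{y})$, so a coordinatewise tail bound followed by a union bound over the $n$ coordinates yields a high-probability $\Vert\cdot\Vert_\infty$ estimate scaling like $\sigma_w\sqrt{\log(n/\delta)/s_N(\mb{y})}$. The remaining polynomial-in-$n$ factors and the appearance of $\sqrt{s_N(\mb{y})}$ inside the logarithm are inherited from the concentration result of \cite{dean2020robust} that this lemma adapts, where the randomness of the count $s_N(\mb{y})$ is accounted for explicitly; I would cite this rather than re-derive the precise constants. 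The main obstacle is exactly this coupling: the weights and the normalizer $s_N(\mb{y})$ are themselves random and data-dependent, so concentration cannot be applied naively. Resolving it hinges on the noise/location independence that licenses conditioning, after which the estimator reduces to an average of a fixed number of independent noise vectors and the clean Gaussian scaling applies.
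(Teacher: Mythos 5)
You should know at the outset that the paper contains no proof of this lemma: it is stated as ``adapted from~\cite{dean2020robust}'' and invoked purely to motivate Definition~\ref{def:nonparametric}, with the paper explicitly remarking that the probabilistic subtleties (in particular, extending the pointwise statement to hold uniformly over a compact set) are handled in~\cite{dean2020robust}. The only work the appendix actually does around the lemma is downstream of it: lower-bounding $s_N(\mb y)$ by the state-space ball count $\#\{\mb x_i \in \mathcal{X}_{\mathcal{S}} : \Vert\mb x - \mb x_i\Vert_2 \le \alpha/\mathfrak{L}_{\mb p}\}$ via the smoothness of $\mb p$, so that the lemma instantiates the non-parametric error bound with $L = \mathfrak{L}_{\mb p}\mathfrak{L}_{\mb q}$, $\sigma = n\sigma_w\sqrt{n\log(n\sqrt{N}/\delta)}$, and $\gamma = \alpha/\mathfrak{L}_{\mb p}$. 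So there is no paper proof for your argument to coincide with or diverge from; your sketch reconstructs what the paper delegates entirely to its citation.

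Your skeleton is the right one and, as far as it goes, correct: the bias--variance decomposition using $\mb q(\mb y_i) = \mb x_i$ is exactly what a proof must do, and the bias bound is airtight --- every active index satisfies $\rho(\mb y_i,\mb y)\le\alpha$, so the convex combination of the terms $\mb q(\mb y_i)-\mb q(\mb y)$ has norm at most $\alpha\mathfrak{L}_{\mb q}$, recovering the first term with no slack. Two caveats on the noise term, however. First, your coordinatewise Gaussian tail bound is not licensed by the lemma's stated hypotheses, which give only $\mathbb{E}[\mb w_i]=0$ and $\mathbb{E}[\mb w_i \mb w_i^\top]=\sigma_w^2$, i.e., first and second moments; with moments alone you would get a Chebyshev-type bound with no logarithm, so the $\sqrt{\log(\cdot/\delta)}$ form tacitly requires Gaussian or sub-Gaussian noise --- a hypothesis missing from the lemma as stated and inherited silently from~\cite{dean2020robust}. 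You should flag that assumption rather than assume it. Second, the portion you defer to the citation --- the extra polynomial-in-$n$ factors and the $\sqrt{s_N(\mb y)}$ inside the logarithm --- is precisely what distinguishes the displayed inequality from the generic union-bound estimate, so strictly speaking your argument establishes a tighter, Gaussian-conditional cousin of the claim and cites the rest; it is not an independent derivation of the stated bound. Since the paper itself proves nothing here, that still puts you at parity with, indeed slightly ahead of, the paper's own treatment.
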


To see how this applies to the error bound presented in Definition~\ref{def:nonparametric}, we first note that
\begin{align*}
    s_N(\mb y) &= \sum_{i=1}^N  \mathbf{1}\{\rho(\mb p(\mb x_i),\mb p(\mb x))\leq\alpha\},\\
     &\geq \sum_{i=1}^N  \mathbf{1}\{\|\mb x_i-\mb x\|_2\leq\tfrac{\alpha}{\mathfrak{L}_{\mb{p}}}\},\\
     &=\#\{\mb x_i \in \mathcal X_{\mathcal S} \mid \Vert\mb{x} - \mb{x}_i\Vert\leq \tfrac{\alpha}{\mathfrak{L}_{\mb{p}}}\}.
\end{align*}

Therefore, Lemma~\ref{lem:nw_err_nd} implies a pointwise high probability bound on error of the form in Definition~\ref{def:nonparametric} with 
\[L = \mathfrak{L}_{\mb{p}} \mathfrak{L}_{\mb{q}},\quad  \sigma=n\sigma_w \sqrt{n\log(n\sqrt{N}/\delta)},\quad  \gamma=\tfrac{\alpha}{\mathfrak{L}_{\mb{p}}}\:.\]

Extending this probabilistic result from pointwise to hold over a compact set is nontrivial.
The subtleties involved are handled more carefully in~\cite{dean2020robust}, and we focus only on this illustrative connection here.

\subsection{Proof of Corollary~\ref{coro:dense_data}}
This result follows by combining the expression~\eqref{eq:upper_eps_x} with the nonparametric error bound introduced in Definition~\ref{def:nonparametric}.
Using this definition, we have that
\begin{equation*}
      \varepsilon(\mb{x}) = \sup_{\mb e \in\mathcal E(\mb y)} \|\mb e\|_2 \leq L\gamma +  \frac{\sigma}{\sqrt{\#\{\mb x_i \in \mathcal X_{\mathcal S} \mid \Vert\mb{x} - \mb{x}_i\Vert\leq \gamma\}}}\:.
\end{equation*}
Combining this with the sufficient condition,
\begin{equation*}
      \frac{\sigma}{\sqrt{\#\{\mb x_i \in \mathcal X_{\mathcal S} \mid \Vert\mb{x} - \mb{x}_i\Vert\leq \gamma\}}} < \max\left\{\frac{\Vert L_\mb{g}h({\mb{ x}})\Vert_2}{2\mathfrak{L}_{L_\mb{g}h}},
      \frac{L_\mb{f}h({\mb{x}}) +  \alpha(h({\mb{ x}}))}{2(\mathfrak{L}_{L_\mb{f}h} + \mathfrak{L}_{\alpha \circ h})}
      \right\} -L\gamma   \:.
\end{equation*}
The result follows by rearranging terms.

 \section{Further Experimental Details}\label{app:experiments}
 In this section, we provide additional plots related to the learned model and present a result analogous to Theorem~\ref{thm:err_bd_x} specialized to the experimental setting.
 
An empirical validation of the learned model is presented in Figure~\ref{fig:learning}, which displays the training data, the model predictions of $\theta_y$, and the errors over the validation set.

\begin{figure*}
    \centering
    \begin{subfloat}
    {\includegraphics[height=4cm]{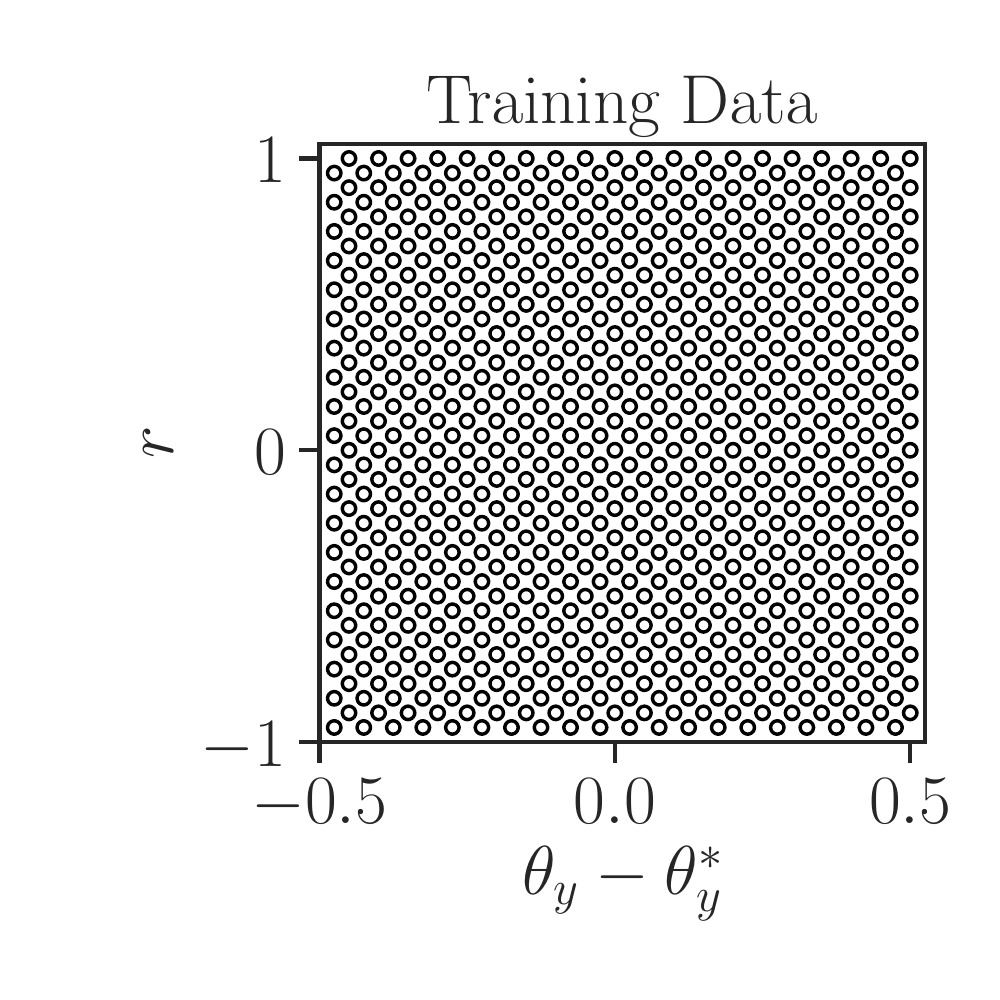}}
    \end{subfloat}
    \begin{subfloat}
    {\includegraphics[height=4cm]{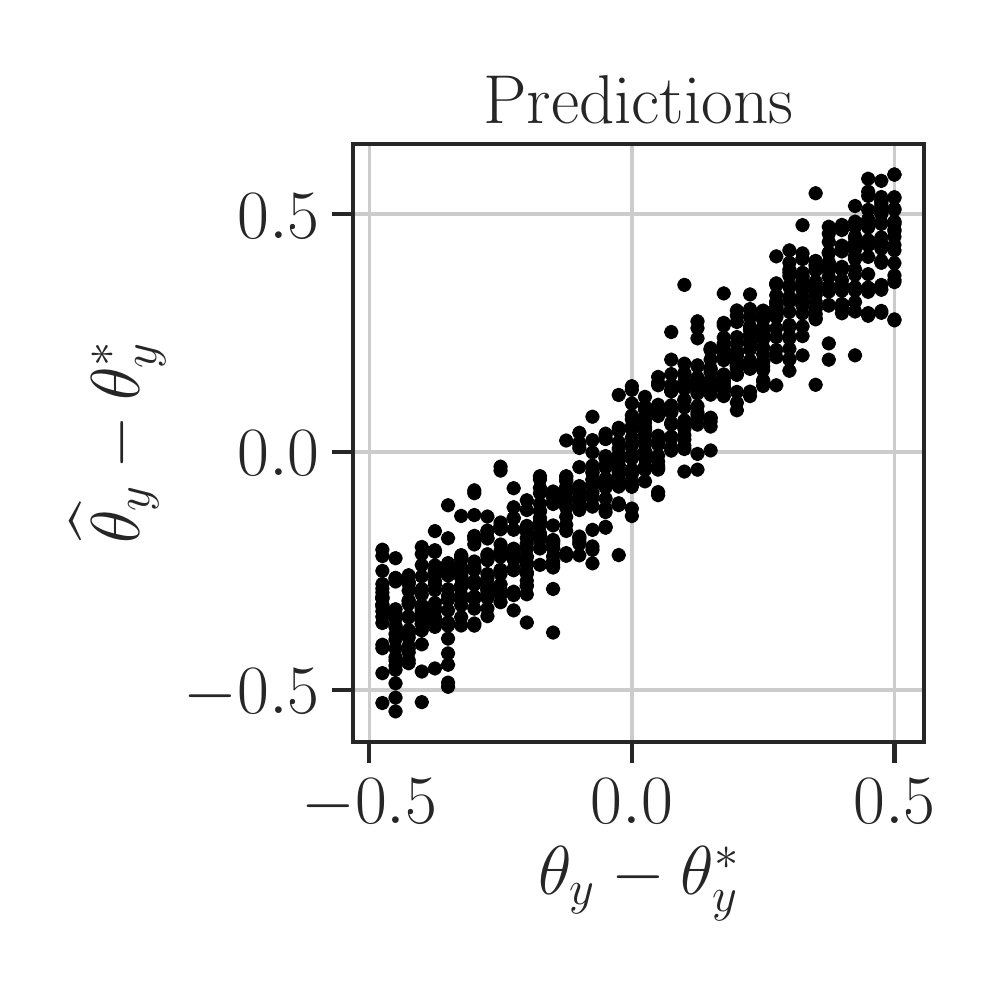}}
    \end{subfloat}
    \begin{subfloat}
    {\includegraphics[height=4cm]{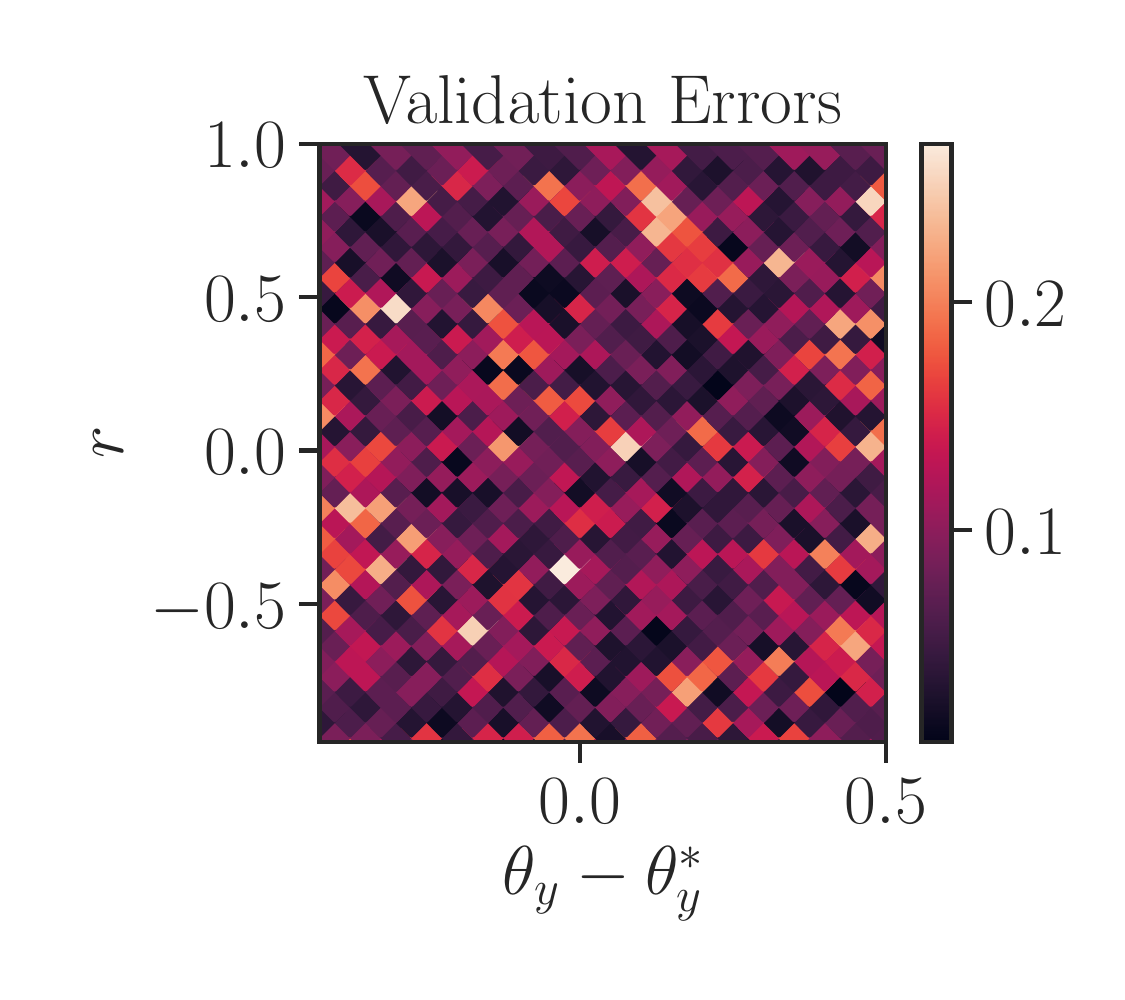}}
    \end{subfloat}
    \caption{Learned model results. (\textbf{Left}) Illustration of the training data, which was collected from a gridded range of position $r$ and pitch angle $\theta_y$. (\textbf{Middle}) The estimated vs. actual pitch angles on a validation set. (\textbf{Right}) The empirical model errors $\varepsilon(\mb x)$ measured on a validation set.}
\label{fig:learning}
\end{figure*}
 
We now present a result analogous to Theorem~\ref{thm:err_bd_x} to derive an expression for $\bar{\varepsilon}(\mb x)$ in the experimental scenario.
For control of the robotic Segway under planar motion,
the input torques about both wheels were constrained to be identical.
Furthermore, the MR-OP Filter constraint in (\ref{eqn:MR-OP}) was applied simultaneously to both safety functions (\ref{eqn:extended_CBFs}).
Therefore, the MR-OP Filter requires the feasibility of the following set of constraints, for $u\in\R$:
\begin{align*}
& \mb u = u \mb 1,\\
& L_\mb{f}h_1(\widehat{\mb{x}})+L_\mb{g}h_1(\widehat{\mb{x}})\mb{u}
-(\mathfrak{L}_{L_\mb{f}h_1}+\mathfrak{L}_{\alpha\circ h_1})\epsilon(\mb{y}) - \mathfrak{L}_{L_\mb{g}h_1}\epsilon(\mb{y})\Vert\mb{u}\Vert_2
\geq-\alpha(h_1(\widehat{\mb{x}})),\\
& L_\mb{f}h_2(\widehat{\mb{x}})+L_\mb{g}h_2(\widehat{\mb{x}})\mb{u}
-(\mathfrak{L}_{L_\mb{f}h_2}+\mathfrak{L}_{\alpha\circ h_2})\epsilon(\mb{y}) - \mathfrak{L}_{L_\mb{g}h_2}\epsilon(\mb{y})\Vert\mb{u}\Vert_2
\geq-\alpha(h_2(\widehat{\mb{x}})).
\end{align*}
Making use of the fact that $\derp{h_1}{\mb x} = -\derp{h_2}{ \mb x}$, we can simplify and rewrite these constraints as 
\begin{align}
\begin{split}\label{eq:two_cbf_constraints}
L_\mb{g}h_1(\widehat{\mb{x}}) \mb 1 u -  \mathfrak{L}_{L_\mb{g}h}\epsilon(\mb{y})\sqrt{2} |u|
&\geq-\alpha(h_1(\widehat{\mb{x}}))-L_\mb{f}h_1(\widehat{\mb{x}}) +(\mathfrak{L}_{L_\mb{f}h}+\mathfrak{L}_{\alpha\circ h})\epsilon(\mb{y}),\\
 -L_\mb{g}h_1(\widehat{\mb{x}}) \mb 1 u -  \mathfrak{L}_{L_\mb{g}h}\epsilon(\mb{y})\sqrt{2} |u|
&\geq-\alpha(h_2(\widehat{\mb{x}}))+L_\mb{f}h_1(\widehat{\mb{x}}) +(\mathfrak{L}_{L_\mb{f}h}+\mathfrak{L}_{\alpha\circ h})\epsilon(\mb{y}).
\end{split}
\end{align}

\begin{proposition}
The set of constraints~\eqref{eq:two_cbf_constraints} will be feasible for all $\mb x \in \C$ if for all $\mb x \in \C$:
\begin{align*}
      \varepsilon(\mb{x}) <
       \frac{1}{2}\max\Big\{
         &\min\left\{\frac{\alpha(h_1({\mb{x}}))+L_\mb{f}h_1({\mb{x}})}{(\mathfrak{L}_{L_\mb{f}h}+\mathfrak{L}_{\alpha\circ h})},\frac{\alpha(h_2({\mb{x}}))-L_\mb{f}h_1({\mb{x}})}{(\mathfrak{L}_{L_\mb{f}h}+\mathfrak{L}_{\alpha\circ h})}\right\},
         \\
         &\min\Big\{\frac{\alpha(h_1({\mb{x}}))+L_\mb{f}h_1({\mb{x}})}{(\mathfrak{L}_{L_\mb{f}h}+\mathfrak{L}_{\alpha\circ h})},
         \\
         &\,\frac{|L_\mb{g}h_1({\mb{x}}) \mb 1|(\alpha(h_1({\mb{x}}))+ \alpha(h_2({\mb{x}})))}{\mathfrak{L}_{L_\mb{g}h}\sqrt{2}(\alpha(h_2({\mb{x}}))-\alpha(h_1({\mb{x}}))-2L_\mb{f}h_1({\mb{x}}))+2(\mathfrak{L}_{L_\mb{f}h}+\mathfrak{L}_{\alpha\circ h})|L_\mb{g}h_1({\mb{x}}) \mb 1|}\Big\},
         \\
         &
        \min\Big\{\frac{\alpha(h_2({\mb{x}}))-L_\mb{f}h_1({\mb{x}})}{(\mathfrak{L}_{L_\mb{f}h}+\mathfrak{L}_{\alpha\circ h})},
        \\
         &\, \frac{|L_\mb{g}h_1({\mb{x}}) \mb 1|(\alpha(h_1({\mb{x}}))+ \alpha(h_2({\mb{x}})))}{\mathfrak{L}_{L_\mb{g}h}\sqrt{2}(\alpha(h_1({\mb{x}}))-\alpha(h_2({\mb{x}}))+2L_\mb{f}h_1({\mb{x}}))+2(\mathfrak{L}_{L_\mb{f}h}+\mathfrak{L}_{\alpha\circ h})|L_\mb{g}h_1({\mb{x}}) \mb 1|}\Big\}
        \Big\}.
\end{align*}
\end{proposition}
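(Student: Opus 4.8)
The plan is to read \eqref{eq:two_cbf_constraints} as a one-dimensional feasibility problem in the scalar $u$ and to certify feasibility from each term of the outer maximum separately. Write $\ell \triangleq L_\mb{g}h_1(\widehat{\mb x})\mb 1$, $\beta \triangleq \sqrt 2\,\mathfrak{L}_{L_\mb{g}h}\epsilon(\mb y)$, $\kappa \triangleq (\mathfrak{L}_{L_\mb{f}h}+\mathfrak{L}_{\alpha\circ h})\epsilon(\mb y)$, $a_1 \triangleq \alpha(h_1(\widehat{\mb x}))+L_\mb{f}h_1(\widehat{\mb x})$, and $a_2 \triangleq \alpha(h_2(\widehat{\mb x}))-L_\mb{f}h_1(\widehat{\mb x})$, so that \eqref{eq:two_cbf_constraints} becomes $f_1(u)\triangleq \ell u-\beta|u|+a_1-\kappa\ge 0$ and $f_2(u)\triangleq -\ell u-\beta|u|+a_2-\kappa\ge 0$. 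Each $f_i$ is concave and piecewise affine, and $\min\{f_1,f_2\}\to -\infty$ as $|u|\to\infty$, so feasibility is equivalent to $\max_u \min\{f_1(u),f_2(u)\}\ge 0$, with the maximum attained either at $u=0$ or at the crossing $u^\star=(a_2-a_1)/(2\ell)$, where $f_1=f_2=\tfrac{a_1+a_2}{2}-\tfrac{\beta|a_2-a_1|}{2|\ell|}-\kappa$. The key structural remark is that the hypothesis $\varepsilon(\mb x)<\tfrac12\max\{\cdot,\cdot,\cdot\}$ is a disjunction: it holds iff $\varepsilon(\mb x)$ is below one half of at least one of the three terms. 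Hence it suffices to prove three independent implications, each term $\Rightarrow$ feasibility.

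The first term is handled by the candidate $u=0$. Lipschitz continuity of $L_\mb{f}h_1$ and $\alpha\circ h_i$ together with $\|\mb e(\mb x)\|_2\le \varepsilon(\mb x)$ gives $a_1\ge \alpha(h_1(\mb x))+L_\mb{f}h_1(\mb x)-(\mathfrak{L}_{L_\mb{f}h}+\mathfrak{L}_{\alpha\circ h})\varepsilon(\mb x)$ and similarly for $a_2$; the factor $2$ appearing in the first argument of the maximum then forces $a_1,a_2>\kappa$, i.e. $f_1(0),f_2(0)>0$. This is the exact two-constraint analogue of the ``natural dynamics are safe'' step and of the conversion from \eqref{eq:esp_y_bd} to \eqref{eq:upper_eps_x} in Theorem~\ref{thm:err_bd_x}, and is routine.

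The remaining two terms are certified at the nonzero crossing. By the symmetry $h_1\leftrightarrow h_2$, which swaps $a_1\leftrightarrow a_2$, sends $\ell\mapsto-\ell$, and merely interchanges the two rows of \eqref{eq:two_cbf_constraints}, it suffices to treat the second term in the regime $\alpha(h_2(\mb x))-L_\mb{f}h_1(\mb x)\ge \alpha(h_1(\mb x))+L_\mb{f}h_1(\mb x)$; the third term then follows verbatim after relabeling. Rearranging the crossing inequality into the product form $|\ell|\big(\tfrac{a_1+a_2}{2}-\kappa\big)\ge \tfrac{\beta|a_2-a_1|}{2}$ and solving the threshold at equality reproduces, in the estimated-state quantities, exactly the ratio that forms the second argument of the inner minimum; the auxiliary bound $\tfrac{\alpha(h_1)+L_\mb{f}h_1}{\mathfrak{L}_{L_\mb{f}h}+\mathfrak{L}_{\alpha\circ h}}$ records that the crossing must also lie on the admissible branch, so that it is genuinely the maximizer of $\min\{f_1,f_2\}$ and the ratio conversion stays non-degenerate. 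Requiring both to survive is what produces the inner minimum.

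The main obstacle is this last step: unlike the single-CBF bound of Theorem~\ref{thm:err_bd_x}, the crossing condition is a \emph{ratio} of estimated-state quantities, so it does not convert to the true state term by term. I expect to carry it out by replacing each uncertain quantity ($L_\mb{g}h_1\mb 1$, $\alpha(h_1),\alpha(h_2)$, and $L_\mb{f}h_1$) by its true-state value up to an $\varepsilon(\mb x)$-sized Lipschitz error --- the error in $L_\mb{g}h_1\mb 1$ carrying the factor $\sqrt 2=\|\mb 1\|_2$ that already appears in \eqref{eq:two_cbf_constraints} --- and absorbing all of these errors into the global factor $\tfrac12$, exactly doubling the error budget as in the passage from \eqref{eq:esp_y_bd} to \eqref{eq:upper_eps_x}. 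The delicate bookkeeping is controlling the sign of $a_2-a_1$ and keeping a usable lower bound on $|\ell|$: the latter degenerates precisely when $L_\mb{g}h_1\mb 1$ is small, which is the regime in which the first term of the maximum rather than the crossing terms must be invoked --- so the three-way maximum is exactly what is needed to cover all cases.
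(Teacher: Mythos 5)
Your scalar feasibility analysis --- reducing \eqref{eq:two_cbf_constraints} to $\max_u \min\{f_1(u),f_2(u)\}\ge 0$ and splitting cases between $u=0$ and the crossing point --- is essentially the paper's Lemma~\ref{lem:simultaneous_feas} (the paper maximizes one constraint subject to the other, placing the optimum on the constraint boundary; same case structure, same thresholds), and your treatment of the first term of the maximum via $u=0$ is correct. The genuine gap is the step you yourself flag as the ``main obstacle'': you run the exact feasibility analysis in the \emph{estimated-state} quantities and only afterwards try to convert the resulting ratio thresholds to the true state, absorbing all Lipschitz errors into the global factor $\tfrac12$. That conversion is not delicate bookkeeping; as a term-by-term perturbation argument it provably cannot close.

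To see why, write $\kappa'=\mathfrak{L}_{L_\mb{f}h}+\mathfrak{L}_{\alpha\circ h}$, $\beta'=\sqrt2\,\mathfrak{L}_{L_\mb{g}h}$, and let $\ell,a_1,a_2$ denote true-state values with hatted counterparts at $\widehat{\mb x}$. For the crossing term you must deduce
\begin{equation*}
\varepsilon\bigl(\beta'(\hat a_2-\hat a_1)+2\kappa'|\hat\ell|\bigr)\le|\hat\ell|(\hat a_1+\hat a_2)
\qquad\text{from}\qquad
2\varepsilon\bigl(\beta'(a_2-a_1)+2\kappa'|\ell|\bigr)\le|\ell|(a_1+a_2).
\end{equation*}
If you insert the worst-case Lipschitz bounds ($|\hat\ell|\ge|\ell|-\beta'\varepsilon$, $\hat a_2-\hat a_1\le a_2-a_1+2\kappa'\varepsilon$, $\hat a_1+\hat a_2\ge a_1+a_2-2\kappa'\varepsilon$) and cancel the hypothesis, the requirement reduces to $(a_2-a_1)-(a_1+a_2)\ge 2\kappa'\varepsilon$, i.e.\ $a_1\le-\kappa'\varepsilon$, which is incompatible with $a_1>2\kappa'\varepsilon>0$ --- a condition simultaneously in force through the first entry of the same inner minimum. (The worst cases need not be jointly attainable, so this does not falsify the implication, but it shows your ``absorb into $\tfrac12$'' plan cannot establish it, and no other argument is given.) The paper avoids ever needing a perturbation bound on a ratio by reversing the order of the two steps: it first applies the Lipschitz bounds to the constraints themselves, where $L_\mb{g}h_1\mb 1$, $L_\mb{f}h_1$, and $\alpha(h_i)$ all enter \emph{linearly}, concluding that feasibility of the true-state constraint system with doubled error parameter $2\varepsilon(\mb x)$ implies feasibility of \eqref{eq:two_cbf_constraints}; only then does it run the exact scalar analysis (your crossing argument) on that true-state system. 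The factor $\tfrac12$ in the proposition is then nothing more than the substitution $\epsilon=2\varepsilon(\mb x)$ into the exact threshold of Lemma~\ref{lem:simultaneous_feas}. With this reordering, your feasibility analysis applies verbatim and completes the proof.
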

\begin{proof}
First, we adapt constraints in~\eqref{eq:two_cbf_constraints} to be in terms of the true state $\mb x$ rather than the estimated $\widehat{\mb x}$.
By a smoothness argument,
\begin{align*}
L_\mb{g}h_1({\mb{x}} - \mb e) \mb 1 u -  \mathfrak{L}_{L_\mb{g}h}\varepsilon(\mb x)\sqrt{2} |u|
\geq-\alpha(h_1({\mb{x}} - \mb e))-L_\mb{f}h_1({\mb{x}} - \mb e) +(\mathfrak{L}_{L_\mb{f}h}+\mathfrak{L}_{\alpha\circ h})\varepsilon(\mb x) \\
\impliedby L_\mb{g}h_1({\mb{x}}) \mb 1 u - 2 \mathfrak{L}_{L_\mb{g}h}\varepsilon(\mb x)\sqrt{2} |u|
\geq-\alpha(h_1({\mb{x}}))-L_\mb{f}h_1({\mb{x}}) +2(\mathfrak{L}_{L_\mb{f}h}+\mathfrak{L}_{\alpha\circ h})\varepsilon(\mb x)\:.
\end{align*}
A similar argument holds for the second constraint.
Therefore, to find a guarantee depending only on $\mb x$, we effectively double the radius of error to $2\varepsilon(\mb x)$.

Then the result follows from Lemma~\ref{lem:simultaneous_feas}, which applies to this problem with
\begin{align*}
\epsilon &= 2\varepsilon(\mb x),\\
a &= L_\mb{g}h_1({\mb{x}}) \mb 1, \\
b &= \mathfrak{L}_{L_\mb{g}h}\sqrt{2},\\
d_1 &= \alpha(h_1({\mb{x}}))+L_\mb{f}h_1({\mb{x}}),\\
d_2 &= \alpha(h_2({\mb{x}}))-L_\mb{f}h_1({\mb{x}}),\\
L &= (\mathfrak{L}_{L_\mb{f}h}+\mathfrak{L}_{\alpha\circ h}).
\end{align*}
\end{proof}

\begin{lemma}\label{lem:simultaneous_feas}
The following two constraints 
\begin{align*}
a u - b\epsilon |u| \geq -d_1+L\epsilon,\\
-a u - b \epsilon|u| \geq- d_2 + L \epsilon,
\end{align*}
are simultaneously feasible for some $ u \in\R$ if and only if
\begin{align*}
\epsilon\leq \max\left\{
 \min\left\{\frac{d_1}{L},\frac{d_2}{L}\right\},
 \min\left\{\frac{d_1}{L}, \frac{|a|(d_1+d_2)}{b(d_2-d_1)+2L|a|}\right\},
\min\left\{\frac{d_2}{L}, \frac{|a|(d_1+d_2)}{b(d_1-d_2)+2L|a|}\right\}
\right\}\:.
\end{align*}
\end{lemma}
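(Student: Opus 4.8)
The plan is to read this as a purely one–dimensional feasibility question in the scalar $u\in\R$, exploiting that the only nonlinearity is $|u|$, which becomes affine once $\mathrm{sign}(u)$ is fixed. Concretely, feasibility is equivalent to $\max_{u\in\R}\min\{f_1(u),f_2(u)\}\ge 0$, where $f_1(u)=au-b\epsilon|u|+d_1-L\epsilon$ and $f_2(u)=-au-b\epsilon|u|+d_2-L\epsilon$. Since $b\epsilon\ge 0$, each $f_i$ is concave and piecewise linear with a single kink at $u=0$, so $g:=\min\{f_1,f_2\}$ is concave. The problem therefore collapses to locating the maximizer of a concave function and checking the sign of its value, and the three terms of the claimed $\max$ should arise as the values of $g$ at the finitely many candidate maximizers.

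The structural facts I would use are that $f_1-f_2=2au+(d_1-d_2)$ is affine, vanishing at $u^\star=(d_2-d_1)/(2a)$, so that $g$ agrees with $f_1$ on one side of $u^\star$ and with $f_2$ on the other; and that each $f_i$ attains its own peak at $u=0$. Hence the maximizer of $g$ is always either the common kink $u=0$ or the crossing point $u^\star$. Evaluating and imposing nonnegativity at these candidates is what I expect to generate the bound. At $u=0$ one gets $g(0)=\min\{d_1,d_2\}-L\epsilon\ge 0\iff\epsilon\le\min\{d_1/L,d_2/L\}$, the first term. When $u^\star>0$ (which for $a>0$ means $d_2>d_1$), the binding value is $g(u^\star)=f_1(u^\star)=(a-b\epsilon)\tfrac{d_2-d_1}{2a}+d_1-L\epsilon$, and setting this $\ge 0$ and clearing the positive denominators yields $\epsilon\bigl(2aL+b(d_2-d_1)\bigr)\le a(d_1+d_2)$, i.e. the rational expression $\tfrac{|a|(d_1+d_2)}{b(d_2-d_1)+2L|a|}$ of the second term; the case $u^\star<0$ gives the third term.

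The appearance of $|a|$ rather than $a$ reflects the exact symmetry of the problem under $u\mapsto -u,\ a\mapsto -a$, which simultaneously interchanges $d_1\leftrightarrow d_2$. I would use this to assume $a\ge 0$ without loss of generality and then recover the general statement, so that the $u^\star<0$ branch is literally the image of the $u^\star>0$ branch under $d_1\leftrightarrow d_2$. Taking the union of the feasible $\epsilon$–ranges over the two signs of $u$ and $u=0$, and using that each range is an initial interval in $\epsilon$, turns the disjunction into a single $\max$, which is what must be matched against the claimed closed form in both directions of the ``if and only if.''

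I expect the main obstacle to be the boundary behaviour encoded by the inner minima. Whether the maximizer of $g$ genuinely sits at the interior crossing $u^\star$ or is instead pushed back to the kink $u=0$ depends on the sign of the active slope $a-b\epsilon$ there, and this is exactly where regime-dependent side conditions (such as $\epsilon<a/b$, or the requirement that an upper bound like $\tfrac{d_2-L\epsilon}{a+b\epsilon}$ be nonnegative) enter. The delicate part is to check that, after uniting over all sign patterns of $a-b\epsilon$, $a+b\epsilon$, $d_1-L\epsilon$, and $d_2-L\epsilon$, these auxiliary inequalities are precisely absorbed into the $\min$/$\max$ structure and do not survive as extra constraints; carrying out this collapse consistently for both orderings of $d_1,d_2$ and both signs of $a$ is where essentially all of the casework lives.
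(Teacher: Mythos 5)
Your reduction --- feasibility iff $\max_u g(u)\ge 0$ for the concave piecewise-linear $g=\min\{f_1,f_2\}$, whose maximum is attained at one of the two breakpoints $u=0$ or $u^\star=(d_2-d_1)/(2a)$ --- is sound, and it is in substance the same computation as the paper's proof: there, $u$ is restricted to have the sign of $a$ and $au-b\epsilon|u|$ is maximized subject to the second constraint, which lands on exactly the same two candidate points and the same two inequalities $g(0)\ge 0$ and $g(u^\star)\ge 0$. Your algebra at both candidates is correct, and your observation that the feasible set of $\epsilon$ is downward closed is what legitimately collapses the case disjunction into a single $\max$.

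The gap sits precisely in the step you deferred, and it is not merely delicate bookkeeping: the matching to the printed formula cannot be completed, because the printed formula is wrong. Your $u^\star>0$ case ($d_2>d_1$ when $a>0$) produces the standalone bound $\epsilon\le \frac{|a|(d_1+d_2)}{b(d_2-d_1)+2L|a|}$ with no $d_1/L$ cap; the only side condition in sight, $d_2-L\epsilon\ge 0$, is automatically implied by this bound, since $\frac{d_2}{L}-\frac{|a|(d_1+d_2)}{b(d_2-d_1)+2L|a|}$ has the sign of $(d_2-d_1)(bd_2+L|a|)$ and the denominator is positive when $d_2>d_1$. So the correct second term is $\min\left\{\frac{d_2}{L},\frac{|a|(d_1+d_2)}{b(d_2-d_1)+2L|a|}\right\}$ (a redundant cap), and symmetrically the third term should carry the cap $d_1/L$. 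The printed lemma attaches the caps the other way around, and with that pairing the statement is false: for $a=b=L=1$, $d_1=0$, $d_2=10$, the constraints are feasible exactly for $\epsilon\le \tfrac{5}{6}$ (e.g.\ $u=5$ works at $\epsilon=\tfrac{5}{6}$), yet the printed right-hand side evaluates to $\max\left\{0,\min\left\{0,\tfrac{5}{6}\right\},\min\left\{10,-\tfrac{5}{4}\right\}\right\}=0$. The paper's own proof in fact derives the same inequalities you do and then mislabels them (``the third condition'') to fit the statement, so the discrepancy is a typo in the lemma: the $d_1/L$ and $d_2/L$ caps (equivalently, the $b(d_2-d_1)$ versus $b(d_1-d_2)$ denominators) are interchanged. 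Your argument goes through essentially verbatim once it targets the corrected formula; as written, the assertion that the $u^\star>0$ case ``gives the second term'' is the one step in your outline that would fail.
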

\begin{proof}
It is equivalent to consider 
\begin{align*}
-d_1+L\epsilon \leq \max_u \quad& a u - b\epsilon |u| \\
\text{s.t.}\quad &-a u - b\epsilon |u| \geq -d_2+L\epsilon.
\end{align*}
First, notice that the first condition implies that  $u=0$ satisfies the expression.

Otherwise, consider the case that $-d_2+L\epsilon\leq 0$, and see that we will set $u$ to have the same sign as $a$.
Then the optimization problem is solved by
\begin{align*}
(|a|  - b\epsilon)\frac{-d_2+L\epsilon}{ - (|a| + b\epsilon)} = \max_{|u|} \quad& |u|(|a|  - b\epsilon) \\
\text{s.t.}\quad &- (|a| + b\epsilon) |u| \geq -d_2+L\epsilon.
\end{align*}
This yields the third condition after some algebra:
\begin{align*}
(|a|  - b\epsilon)\frac{-d_2+L\epsilon}{ - (|a| + b\epsilon)} &\geq -d_1+L\epsilon,\\
(|a|  - b\epsilon) (d_2-L\epsilon)&\geq (-d_1+L\epsilon)(|a| + b\epsilon),\\
|a|d_2  - b\epsilon d_2 -L\epsilon|a| +bL\epsilon^2 &\geq -d_1|a|+L\epsilon|a| + -d_1b\epsilon+L b\epsilon^2,\\
|a|(d_2+d_1)    &\geq (2 L|a| -d_1b + d_2 b)\epsilon.
\end{align*}
Reversing the roles of $d_1$ and $d_2$ and setting $u$ to have the opposite sign of $a$ yields the second condition.
\end{proof}

 \end{document}